\documentclass[12pt, oneside]{article} 
\usepackage{graphicx}
\usepackage[margin=0.5in]{geometry}
\usepackage{amssymb}
\usepackage{amsmath}
\usepackage{amsthm}
\usepackage[colorlinks]{hyperref}
\sloppy


\newcommand{\BEQ}{::=}
\newcommand{\BOR}{\mid}
\newcommand{\type}{\!:\!}
\newcommand{\R}{\mathbb{R}}
\newcommand{\apdiff}[3]{\left .\frac{\partial #1}{\partial #2} \right\vert_{#2 = #3}}
\newcommand{\pdiff}[2]{\frac{\partial #1}{\partial #2}}
\newcommand{\pdiffop}[1]{\frac{\partial}{\partial #1}}
\newcommand{\ov}[1]{\overline{#1}}

\newcommand{\FV}{\mathrm{FV}}
\newcommand{\FnV}{\mathrm{FnV}}

\newcommand{\op}{\mathrm{op}}
\newcommand{\den}[1]{\mathcal{S}[\![ #1 ]\!]}

\newcommand{\CF}{\mathrm{CF}}

\newcommand{\lskip}{$\mbox{ }$\vspace{-10pt}\\}
\newcommand{\Imm}{\mathrm{Im}}
\newcommand{\dapprox}{\,\dot{\approx}\,}

\newcommand{\rhoN}{\rho_{\mathbb{N}}}
\newcommand{\varphiN}{\varphi_{\mathbb{N}}}

\newcommand{\acomment}[1]{}
\newcommand{\myomit}[1]{}
\newcommand{\realalpha}{\alpha}
\newcommand{\realbeta}{\beta}

\newcommand{\funalpha}{f }
\newcommand{\funbeta}{g }
\newcommand{\fungamma}{h }
\newcommand{\eqdef}{=_{\tiny \mathrm{def}}}
\newcommand{\rdash}{\vdash_{{\tiny \mathrm{RTC}}}}
\newcommand{\B}{\mathbb{B}}

\newtheorem{lemma}{Lemma}
\newtheorem{theorem}{Theorem}
\newtheorem{corollary}{Corollary}

\title{A complete equational axiomatisation of partial differentiation\hspace{-5pt}
\footnote{This is a version with minor corrections of a paper given at MFPS 2020.}}
\author{Gordon D.\ Plotkin}

  \date{Google Research\\Mountain View, United States}
\begin{document}
  \maketitle

  
\begin{abstract} 
  We formalise the well-known  rules of partial differentiation in a version of equational logic with function variables and binding constructs. We prove the resulting theory is complete with respect to polynomial interpretations. The proof makes use of Severi's interpolation theorem that all multivariate Hermite problems are solvable. We also present a number of  related results, such as decidability and equational completeness.

\end{abstract}

\section{Introduction}



There has been recent increasing interest in categorical axiomatisations of differential structure. For example, for forward differentiation, see~\cite{BCS09}; for reverse differentiation see~\cite{CCG20}; and for tangent structures, see~\cite{CC14}. A natural question is whether the axioms are well chosen. The authors generally show they hold in natural structures. For example, in the case of cartesian differential categories they hold for the category of finite powers of the reals and smooth functions.  So the axioms are, in that sense, correct. But one can additionally ask, if, or in what sense, they are complete, that is whether there are missing axioms.

Here we interest ourselves in an allied basic logical question which we hope will help with the categorical one: are the standard rules for  manipulating partial derivatives  complete?
%
%
The rules are indeed well known: derivatives of products and sums are given by simple formulas involving their immediate subexpressions and their partial derivatives;  derivatives of real constants and variables are 0 or 1; the chain rule takes care of function applications; and partial derivatives with respect to different variables commute. They are surely complete.



To prove this,  we proceed by first formalising the standard rules of partial differentiation using a suitable  kind of equational logic. Our axiom system consists of the ring axioms, addition and multiplication tables for the reals, and four axioms for partial differentiation.
As partial differentiation involves variable binding, we are outside the scope of standard equational logic. So we instead use an expanded version which allows binding constructs and makes use of function variables.  An equational  logic of this kind, called \emph{second-order equational logic}, has been presented by Fiore, Hur, and Mahmoud~\cite{FH10,FM10}. We employ a minor, although entirely equivalent, variant of their logic. 

The resulting theory of partial differentiation is naturally interpreted using smooth functions, but, for completeness, it turns out that it suffices to use only polynomial functions. Indeed, the theory is complete even if the interpretation of function variables is restricted  to natural number polynomial functions, and the interpretation of  variables to natural numbers. This remains the case if the constants are restricted to the rationals or to $0$ and $1$ (equivalently the integers).

To establish completeness we employ an interpolation theorem, a well-known theorem of Severi~\cite{Sev21} on the solvability of Hermite interpolation problems. Hermite interpolation generalises Lagrange interpolation. In Lagrange interpolation, one seeks a real polynomial taking prescribed values at prescribed points, and this has an evident generalization to multivariate Lagrange interpolation. In multivariate  Hermite interpolation,  one additionally prescribes values of (possibly higher-order) partial derivatives. 

Theorem~\ref{complete}  establishes completeness relative to natural number polynomial interpretations of function variables and natural number interpretations of variables. (The polynomials are adaptations of real polynomials solving suitable Hermite interpolation problems.) We also give a number of other results. 
Theorem~\ref{analysis} characterises the theorems of our equational theory in terms of  an equivalence relation between canonical forms. 
Theorem~\ref{RTC} shows that the equivalence relation holds if, and only if, it can be established using just one axiom for partial differentiation, that the order of partial differentiation by two variables does not matter (with a slightly more refined notion of canonical form this axiom can be eliminated, when no axioms for partial differentiation are needed).
Theorem~\ref{HP} shows that the theory is not only complete with respect to standard interpretations over the reals, it is also equationally complete (also known as Hilbert-Post complete), that is, it has no equationally consistent proper extensions, see~\cite{Tay79}. This remains the case if the constants are restricted to the rationals but not if they are restricted to the integers. 
Theorem~\ref{decidable} shows that the subtheory in which the constants are restricted to the rationals is decidable, and that, in case an equation does not hold, a counterexample to it can be found.  An important remaining question is the complexity of the decision problem.

\section{Axiomatisation of partial differentiation}

We assume disjoint sets of  variables, ranged over by $x,y,z,\ldots$, and of function variables, ranged over by $\funalpha,\funbeta,\fungamma,\ldots$; each function variable has a given arity $n \geq 0$, written $\funalpha\type n$. The set of variables is  assumed denumerable, as is the set of function variables of each  arity. Expressions have the following forms:
\[e \BEQ r \quad (r \in \R) \BOR x \BOR e_0 + e_1 \BOR e_0e_1 \BOR \funalpha(e_0,\ldots,e_{n-1}) \quad (\funalpha\type n) \BOR \mathrm{PDiff}(x .e_0, e_1)\]
The expression $ \mathrm{PDiff}(x.e_0, e_1)$ is read as the partial derivative of $e_0$ with respect to $x$, evaluated at $e_1$. In this expression, the variable $x$ has binding power over $e_0$, but not over $e_1$. Below, we use the more familiar, and more natural, expression  $ \apdiff{e_0}{x}{e_1}$ (while more natural, this expression obscures  the role of the variable $x$). We write $\pdiffop{x}e$ for $\apdiff{e}{x}{x}$, and note that the last occurrence of $x$ is free; we write $\pdiff{^ne}{x_{n-1}\ldots \partial x_0}$ for $\pdiffop{x_{n-1}}\ldots \pdiffop{x_0}e$.

Free variables and $\realalpha$-equivalence, are defined as usual, and,  as is also usual, we identify $\realalpha$-equivalent expressions. 
We write $\FV(e)$ for the set of free variables of an expression $e$ and $\FnV(e)$ for the set of its function variables.  Every expression $e$ has a size $|e|$, defined in an evident way.

The simultaneous substitution \[e[e'_0/x_0,\ldots,e'_{n-1}/x_{n-1}]\] of expressions for variables is defined as may be expected, with the clause for $ \mathrm{PDiff}(x.e_0, e_1)$, with its bound variable, being:
\[\begin{array}{lcl}\hspace{-13pt}
 \mathrm{PDiff}(x.e_0, e_1)[e'_0/x_0,\ldots,e'_{n-1}/x_{n-1}] & = &  \mathrm{PDiff}(x.e_0[e'_0/x_0,\ldots,e'_{n-1}/x_{n-1}], 
 e_1[e'_0/x_0,\ldots,e'_{n-1}/x_{n-1}])\\
\end{array}\]
where $x \notin  \FV(e'_0) \cup \ldots \cup  \FV(e'_{n-1})$. 
There is also a notion of substitution of \emph{abstracts} $(x_0,\ldots,x_{n-1}).e'$ for function variables.
For   $\funalpha\type n $, we define 
\[e[(x_0,\ldots,x_{n-1}).e'/\funalpha]\]
 by structural induction on $e$ as follows:\\
\[\begin{array}{lcl} 
r[(x_0,\ldots,x_{n-1}).e'/\funalpha] & = & r \\\\

x[(x_0,\ldots,x_{n-1}).e'/\funalpha] & = & x\\\\

(e_0 + e_1)[(x_0,\ldots,x_{n-1}).e'/\funalpha] & = & 
            e_0[(x_0,\ldots,x_{n-1}).e'/\funalpha] + e_1[(x_0,\ldots,x_{n-1}).e'/\funalpha]\\\\
            
(e_0 e_1)[(x_0,\ldots,x_{n-1}).e'/\funalpha] & = & 
             e_0[(x_0,\ldots,x_{n-1}).e'/\funalpha] e_1[(x_0,\ldots,x_{n-1}).e'/\funalpha]\\\\
             
\funbeta(e_0,\ldots,e_{n-1})[(x_0,\ldots,x_{n-1}).e'/\funalpha] & = &
                \left \{\begin{array}{ll}
                                     e'[e_0[(x_0,\ldots,x_{n-1}).e'/\funalpha]/x_0,\ldots,\\ \hspace{50pt} e_{n-1}[(x_0,\ldots,x_{n-1}).e'/\funalpha]/x_{n-1}] \\  \hspace{200pt}(\funbeta = \funalpha)\\\\
                                     \funbeta(e_0[(x_0,\ldots,x_{n-1}).e'/\funalpha],\ldots, \\ \hspace{50pt} 
                                            e_{n-1}[(x_0,\ldots,x_{n-1}).e'/\funalpha]) \\  
                                            \hspace{200pt} (\funbeta \neq \funalpha) \\
                            \end{array}
                 \right .\\\\
                 
\mathrm{PDiff}(x.\, e_0, e_1)[(x_0,\ldots,x_{n-1}).e'/\funalpha] & = &    \mathrm{PDiff}(x.\, e_0[(x_0,\ldots,x_{n-1}).e'/\funalpha], 
 e_1[(x_0,\ldots,x_{n-1}).e'/\funalpha])\\
&& \hfill (x\notin \FV(e')\backslash\{x_0,\ldots,x_{n-1}\})
\end{array}\]
%
%
There is a more general, similarly defined, simultaneous substitution of several abstracts: 
\[e[(x_{11},\ldots,x_{1n_1}).e_1'/\funalpha_1, \ldots, (x_{k1},\ldots,x_{kn_k}).e_k'/\funalpha_k]\]
where $\funalpha_i\type n_i$, for $i = 1,k$.

The axioms of the theory of partial differentiation consist of those for commutative rings, the addition and multiplication tables for the real constants, and  four axioms for partial differentiation. The four axioms are the usual rules for the differentiation of addition and multiplication, the chain rule for binary functions, and the commutativity of partial differentiation with respect to different  variables.

\lskip

\lskip

\[\pdiff{\,x + y}{x} = 1 \qquad \qquad \pdiff{yx}{x} = y\]

\lskip

\[ \pdiff{\,\funalpha(\funbeta_0(x),\funbeta_{1}(x))}{x} \;\;  = \;\;
      \apdiff{\funalpha(x_0,\funbeta_{1}(x))}{x_0}{\funbeta_0(x)}\pdiff{\funbeta_0(x)}{x} \; + \; \apdiff{\funalpha(\funbeta_0(x),x_{1})}
   {x_{1}}{\funbeta_{1}(x)}\pdiff{\funbeta_{1}(x)}{x}
  \]
  
  \lskip 
  
\[\pdiffop{y}\pdiffop{x}\funalpha(x,y) \;\;  = \;\; \pdiffop{x}\pdiffop{y}\funalpha(x,y)\]

\lskip

%
%

%

%
%

The theorems of the theory of partial differentiation are obtained by closing the axioms under the evident  rules for equality, congruence, and both kinds of substitution, following the usual pattern for equational systems. In particular, the congruence rule for partial differentiation is as follows: 
\[\frac{e_0 = e_1 \qquad e'_0 = e'_1}{ \mathrm{PDiff}(x.\, e_0, e'_0) =  \mathrm{PDiff}(x.\, e_1, e'_1)}\]
If an equation $e_0 = e_1$ holds in this theory  we write
\[\vdash e_0 = e_1\]

We next establish a number of expected consequences of the axioms, including a general form of the chain rule. We employ an evident abbreviation $\sum_{i=m}^{n}e_i$ for finite sum expressions.

\begin{lemma} \label{facts} The following equations are provable in the theory of partial differentiation:
\begin{enumerate}
\item \[ \pdiff{x}{x} = 1\]
\item \[\pdiff{\,\funalpha(x) + \funbeta(x)}{x}\;\;  = \;\; \pdiff{\funalpha(x)}{x} + \pdiff{\funbeta(x)}{x} \]
\item \[ \pdiff{\,\funalpha(x)\funbeta(x)}{x} \;\;  = \;\; \funbeta(x)\pdiff{\funalpha(x)}{x} + \funalpha(x)\pdiff{\funbeta(x)}{x}\]
\item 
\[ \pdiff{\,\funalpha(\funbeta_0(x),\ldots,\funbeta_{n-1}(x))}{x}  \;\;  = \;\; 
\sum_{i = 0}^{n-1}\apdiff{\funalpha(\funbeta_0(x), \ldots, \funbeta_{i-1}(x),x_i, \funbeta_{i+1}(x),\ldots,\funbeta_{n-1}(x))}{x_i}{\funbeta_i(x)}\pdiff{\funbeta_i(x)}{x}\]

%

\item
 \[\pdiff{e}{x} = 0  \quad (x \notin \FV(e))\]
 \end{enumerate}
\end{lemma}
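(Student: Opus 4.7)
Items (1), (2), and (3) follow by direct instantiation of the axioms. For (1), substituting $y := 0$ in $\pdiff{x+y}{x} = 1$ and rewriting $x + 0 = x$ under the binder via ring and congruence gives $\pdiff{x}{x} = 1$. For (2) and (3), I instantiate the 2-ary chain rule axiom with $\funalpha$ replaced by the abstracts $(x_0, x_1).\,x_0 + x_1$ and $(x_0, x_1).\,x_0 x_1$ respectively. The two ``constant-in-one-variable'' partial derivatives arising on the right-hand side, such as $\apdiff{x_0 + \funbeta_1(x)}{x_0}{\funbeta_0(x)}$, collapse to $1$ or to $\funbeta_j(x)$ by substituting into the axioms $\pdiff{x+y}{x} = 1$ and $\pdiff{yx}{x} = y$, both in the bound occurrence of $x$ and in the free one; one of the two terms of (3) also uses ring commutativity of multiplication under the binder.

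Item (5) is proved by structural induction on $e$. For $e = r$ or $e = y$ with $y \neq x$, I instantiate the sum axiom to get $\pdiff{x + e}{x} = 1$, commute the sum, apply (2) to split the derivative, and use (1) to cancel, obtaining $\pdiff{e}{x} = 0$. The sum and product cases reduce directly to (2), (3) and the inductive hypothesis. For an application $e = \funalpha(e_1,\ldots,e_n)$ with $x \notin \FV(e_i)$, I apply the 2-ary chain rule with an abstract that absorbs the $e_i$ as fixed parameters, for instance $(z_0, z_1).\,\funalpha(z_0, e_2,\ldots,e_n)$ for $n \geq 1$ (whose side condition holds precisely because $x \notin \FV(e_i)$), together with $\funbeta_0 := (z).e_1$ and $\funbeta_1 := (z).0$; both right-hand-side terms then vanish using $\pdiff{e_1}{x} = 0$ (IH) and $\pdiff{0}{x} = 0$. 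The $n = 0$ subcase uses the constant abstract $(z_0, z_1).\,\funalpha()$ similarly, and the $\mathrm{PDiff}$ case uses the constant abstract $(z_0, z_1).\,\mathrm{PDiff}(y.e_1, e_2)$ together with $\funbeta_i := (z).0$.

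For (4) I proceed by induction on $n$. The cases $n = 0$ and $n = 1$ follow from (5) (used to discard the spurious term produced by the 2-ary chain rule applied with an unused argument slot), and $n = 2$ is the axiom. The hard part, and to my mind the main obstacle of the lemma, is the inductive step: the side condition on $\mathrm{PDiff}$ substitution forbids substituting into $\pdiff{}{x}$ any abstract whose body contains $x$ free, which rules out the naive choice $(u_0, u_1).\,\funalpha(u_0, u_1, \funbeta_2(x),\ldots)$. My plan is to peel off the last argument using the $x$-free abstract $A := (u_0, u_1).\,\funalpha(\funbeta_0(u_0),\ldots,\funbeta_{n-1}(u_0), u_1)$ together with $g_0 := (z).z$ and $g_1 := (z).\funbeta_n(z)$; the 2-ary chain rule then yields the correct $\funbeta_n$-term together with a residual of the form $\apdiff{\funalpha(\funbeta_0(u),\ldots,\funbeta_{n-1}(u), \funbeta_n(x))}{u}{x}$. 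To expand this residual into the remaining $n$ terms I apply the $n$-ary inductive hypothesis at a fresh differentiation variable $w$, instantiated with $(z_0,\ldots,z_{n-1}).\,\funalpha(z_0,\ldots,z_{n-1}, \funbeta_n(x))$, and then perform the substitution $[x/w]$; combining the two pieces gives the full $(n+1)$-ary chain rule.
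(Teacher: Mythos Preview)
Your proof is correct. Parts (1)--(3) match the paper's argument. For (4), your inductive step is structurally the same as the paper's---reduce $n+1$ to $n$ via the binary chain rule---though the paper peels off the \emph{first} argument (after first deriving the diagonal identity $\pdiff{\funalpha(x,x)}{x} = \apdiff{\funalpha(x'_0,x)}{x'_0}{x} + \apdiff{\funalpha(x,x'_1)}{x'_1}{x}$ and substituting into it) whereas you peel off the last; these are cosmetic variations.

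The real divergence is in (5) and in the logical ordering. The paper first obtains $\pdiff{0}{x}=0$ in one step from the multiplication axiom with $y:=0$, then proves the $n=0$ case of (4), and finally derives (5) in a single line by substituting the nullary abstract $(\;).\,e$ for $\funalpha$ in $\pdiff{\funalpha()}{x}=0$ (the side condition being exactly $x\notin\FV(e)$). You instead prove (5) first by structural induction on $e$ and then use it to bootstrap the small-$n$ cases of (4). Your route works, but the structural induction is unnecessary labour: your own argument for the $\mathrm{PDiff}$ case---apply the binary chain rule with the constant abstract $(z_0,z_1).\,e$ and $\funbeta_i := (z).0$---already handles arbitrary $e$ with $x\notin\FV(e)$ uniformly, once $\pdiff{0}{x}=0$ is in hand. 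So the paper's ordering (prove (4) then (5)) is more economical, while yours (prove (5) then (4)) is more self-contained but duplicates work.
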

\begin{proof}
\begin{enumerate}
\item  This follows from the axiom for addition, substituting $0$ for $y$.
\item  This follows from the axiom for addition and the binary chain rule.
\item   This follows from the axiom for multiplication and the binary chain rule.
\item  First, note that, substituting $0$ for $y$ in the axiom for multiplication, we obtain $\pdiff{0}{x} = 0$. 
Next, for the case $n = 0$ of the chain rule, substitute $(x_0,x_1).\,\fungamma()$ for $\funalpha$ and $(z).\,0$ for both $\funbeta_0$ and $\funbeta_1$ in the binary chain rule to get the following provable equations:
 \[\begin{array}{lcl} \pdiff{\fungamma()}{x} & \;\;  = \;\; & 
      \apdiff{\fungamma()}{x_0}{0}\pdiff{0}{x} \; + \; \apdiff{\fungamma()}{x_{1}}{0}\pdiff{0}{x}\\
     & \;\;  = \;\; &        \apdiff{\fungamma()}{x_0}{0}\mbox{{\small 0}} \; + \; \apdiff{\fungamma()}{x_{1}}{0}\mbox{{\small 0}}\\
     & \;\;  = \;\; &  \mbox{{\small 0}}
  \end{array}\]

Next,  for the unary chain rule, substitute $(x_0,x_1).\,\fungamma(x_0)$ for $\funalpha$ and $(z).\,0$ for $\funbeta_1$ in the binary chain rule to get the following provable equations:
 \[\begin{array}{lcl} \pdiff{\fungamma(\funbeta_0(x_0))}{x} & \;\;  = \;\; & 
      \apdiff{\fungamma(x_0)}{x_0}{\funbeta_0(x)}\pdiff{\funbeta_0(x)}{x} \; + \; \apdiff{\fungamma(x_0)}{x_{1}}{0}\pdiff{0}{x}\\
     & \;\;  = \;\; &        \apdiff{\fungamma(x_0)}{x_0}{\funbeta_0(x)}  \pdiff{\funbeta_0(x)}{x} 
       \end{array}\]

Finally, for $n \geq 2$, we proceed by induction. The axiom provides the base case, so assuming the induction hypothesis for $n$, we prove it for $n + 1$. Substituting $(z).\,z$ for $\funbeta_0$ and $\funbeta_1$ in the binary chain rule, and using $\pdiff{x}{x} = 1$, we obtain the following equation:
\[ \pdiff{\,\funalpha(x,x)}{x} \;\;  = \;\;
      \apdiff{\funalpha(x'_0,x)}{x'_0}{x} \; + \; \apdiff{\funalpha(x,x'_{1})}{x'_{1}}{x}
  \]
Then, substituting $(x,y).\,\fungamma(\funbeta_0(x),\funbeta_1(y), \ldots, \funbeta_n(y))$ for $\funalpha$, we obtain:
\[ \pdiff{\,\fungamma(\funbeta_0(x),\funbeta_1(x), \ldots, \funbeta_n(x))}{x} \;  = \;
      \apdiff{\fungamma(\funbeta_0(x'_0),\funbeta_1(x), \ldots, \funbeta_n(x))}{x'_0}{x} \, + \,
       \apdiff{\fungamma(\funbeta_0(x),\funbeta_1(x'_1), \ldots, \funbeta_n(x'_1))}{x'_{1}}{x}
  \]
For the first of these two summands we have:
\[ \begin{array}{lcl}\apdiff{\fungamma(\funbeta_0(x'_0),\funbeta_1(x), \ldots, \funbeta_n(x))}{x'_0}{x} 
& = &
\pdiff{\fungamma(\funbeta_0(x'_0),\funbeta_1(x), \ldots, \funbeta_n(x))}{x'_0} [x/x'_0]\\
& = & \left ( \apdiff{\fungamma(x_0,\funbeta_1(x), \ldots, \funbeta_n(x))}{x_0}{\funbeta_0(x'_0)}\pdiff{\funbeta_0(x'_0)}{x'_0}\right ) [x/x'_0]\\
& = & \apdiff{\fungamma(x_0,\funbeta_1(x), \ldots, \funbeta_n(x))}{x_0}{\funbeta_0(x)}\pdiff{\funbeta_0(x)}{x}
 \end{array}\]
with the second equality being an application of the unary chain rule and the other two equalities being syntactic identities (the second up to $\realalpha$-equivalence).

For the second of these, making use of the induction hypothesis (the $n$-ary chain rule) we have:
\[\begin{array}{lcl} \apdiff{\fungamma(\funbeta_0(x),\funbeta_1(x'_1), \ldots, \funbeta_n(x'_1))}{x'_1}{x} 
&\hspace{-10pt} = &
 \pdiff{\fungamma(\funbeta_0(x),\funbeta_1(x'_1), \ldots, \funbeta_n(x'_1))}{x'_1}[x/x'_1]\\
 & \hspace{-10pt} = & \left  ( \sum_{i = 1}^{n}\apdiff{\fungamma(\funbeta_0(x),\funbeta_1(x'_1), \ldots, \funbeta_{i-1}(x'_1),x_i, \funbeta_{i+1}(x'_1),\ldots,\funbeta_{n}(x'_1))}{x_i}{\funbeta_i(x'_1)}\pdiff{\funbeta_i(x'_1)}{x'_1}\right )[x/x'_1]\\
& \hspace{-10pt} = &  \sum_{i = 1}^{n}\apdiff{\fungamma(\funbeta_0(x),\funbeta_1(x), \ldots, \funbeta_{i-1}(x),x_i, \funbeta_{i+1}(x),\ldots,\funbeta_{n}(x))}{x_i}{\funbeta_i(x)}\pdiff{\funbeta_i(x)}{x}
\end{array}\]
Combining the results, we reach the desired conclusion.

  \item For $n = 0$, the chain rule  is $\pdiff{\funalpha()}{x} = 0$, where $\funalpha \type 0$.
 Substituting $(\;).\,e$ for $\funalpha$ we obtain $\pdiff{e}{x} = 0$ as $x \notin \FV(e)$.
\end{enumerate}
\end{proof}

It will prove useful to know the interaction between differentiation and substitution:
\begin{lemma} \label{subdiff} The following are equivalent:
\begin{enumerate}
\item The general chain rule.
\item The following substitution principle, that for all expressions $e$ and $e_0,\ldots,e_{n-1}$ we have:
%
\[\vdash \pdiff{e[e_0/x_0,\ldots,e_{n-1}/x_{n-1}] }{x} \;\; =  \;\;
\sum_{i = 0}^{n-1} \pdiff{e}{x_i}[e_0/x_0,\ldots,e_{n-1}/x_{n-1}] \pdiff{e_i}{x}\]
where $x_0,\ldots, x_{n-1}$ are distinct variables such that $x \notin \FV(e)\backslash \{x_0,\ldots, x_{n-1}\}$.
\end{enumerate}
\end{lemma}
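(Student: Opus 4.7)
The plan is to establish each direction by a single carefully chosen substitution, the main work being to verify that the variable bookkeeping checks out; no structural induction is needed.

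For the direction (1) $\Rightarrow$ (2), I would start from the general chain rule of Lemma \ref{facts}(4), renaming its bound binders to fresh $y_0,\ldots,y_{n-1}$, and then apply the simultaneous substitution $\funalpha\mapsto (x_0,\ldots,x_{n-1}).e$ and $\funbeta_i\mapsto (z_i).e_i$ with each $z_i$ chosen fresh (so that the substitution of $\funbeta_i(x)$ reduces to $e_i$). Since substitution commutes with the $\mathrm{PDiff}$ construct and with the ring operations (modulo $\realalpha$-conversion), the left-hand side of the chain rule becomes $\pdiff{e[e_0/x_0,\ldots,e_{n-1}/x_{n-1}]}{x}$, while the $i$-th summand on the right becomes $\apdiff{e[e_0/x_0,\ldots,y_i/x_i,\ldots,e_{n-1}/x_{n-1}]}{y_i}{e_i}\,\pdiff{e_i}{x}$. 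Up to $\realalpha$-equivalence, the $\mathrm{PDiff}$ factor is precisely $\pdiff{e}{x_i}[e_0/x_0,\ldots,e_{n-1}/x_{n-1}]$, which is the desired form.

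For the direction (2) $\Rightarrow$ (1), I would instantiate the substitution principle at $e=\funalpha(x_0,\ldots,x_{n-1})$ (with the $x_i$ distinct and distinct from $x$) and $e_i=\funbeta_i(x)$. Then $e[e_0/x_0,\ldots,e_{n-1}/x_{n-1}]=\funalpha(\funbeta_0(x),\ldots,\funbeta_{n-1}(x))$, so the left-hand side of (2) is the left-hand side of the general chain rule. For each summand, $\pdiff{e}{x_i}=\apdiff{\funalpha(x_0,\ldots,x_{n-1})}{x_i}{x_i}$; renaming the bound $x_i$ to avoid capture and then performing the simultaneous substitution yields $\apdiff{\funalpha(\funbeta_0(x),\ldots,x_i,\ldots,\funbeta_{n-1}(x))}{x_i}{\funbeta_i(x)}$, so summing recovers the general chain rule verbatim.

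The only real obstacle is the variable-capture bookkeeping: in each direction one has to $\realalpha$-rename the bound variable implicit in $\pdiff{-}{x}$ (and, in (1)$\Rightarrow$(2), each bound $x_i$ inside the chain rule's $\mathrm{PDiff}$ subterms) so that nothing in $e$, the $e_i$, or $\funbeta_i(x)$ is captured; and one must check that the side condition $x\notin\FV(e)\setminus\{x_0,\ldots,x_{n-1}\}$ guarantees that $x$ is not captured when substituting $e$ for $\funalpha$. Because $\realalpha$-equivalent expressions are identified throughout the calculus, this is routine rather than substantive.
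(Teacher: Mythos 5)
Your direction (2) $\Rightarrow$ (1) is correct and is exactly the paper's argument. The problem is in (1) $\Rightarrow$ (2), and it is not mere bookkeeping: your choice of abstracts $(z_i).\,e_i$ with $z_i$ \emph{fresh} is wrong. With $z_i\notin\FV(e_i)$, the abstract $(z_i).\,e_i$ denotes the \emph{constant} function $u\mapsto e_i$: the free occurrences of $x$ in $e_i$ refer to the ambient environment and, by the side condition on the $\mathrm{PDiff}$ clause of function-variable substitution, must not be captured by any binder. Consequently, after the forced $\realalpha$-renaming of the outer binder of $\pdiff{-}{x}$ away from $x$ (which you yourself prescribe, ``so that nothing in $e$, the $e_i$, or $\funbeta_i(x)$ is captured''), the factor $\pdiff{\funbeta_i(x)}{x}$ becomes $\apdiff{e_i}{x'}{x}$ with $x'\notin\FV(e_i)$, which is provably $0$ by Lemma~\ref{facts}(5), and the left-hand side likewise becomes $\apdiff{e[e_0/x_0,\ldots,e_{n-1}/x_{n-1}]}{x'}{x}$ with $x'$ not free in the body. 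The instantiated equation is then a ring-provable triviality, not the substitution principle; in particular your claim that ``the left-hand side of the chain rule becomes $\pdiff{e[e_0/x_0,\ldots,e_{n-1}/x_{n-1}]}{x}$'' is false for this substitution.

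The missing idea is that the free occurrences of $x$ in the $e_i$ \emph{must} end up bound by the outer differentiation binder --- that intended capture is the entire content of the chain rule --- and the only way the capture-avoiding substitution achieves this is by making $x$ itself the bound variable of the abstract: one must substitute $(x).\,e_i$ for $\funbeta_i$. Then the side condition $x\notin\FV(e_i)\backslash\{x\}$ holds vacuously, no renaming of the outer binder occurs, $\funbeta_i(x)$ still reduces to $e_i$ but with its $x$-dependence intact, and $\pdiff{\funbeta_i(x)}{x}$ becomes $\pdiff{e_i}{x}$ as required. This is what the paper does, in two stages: first instantiate the $\funbeta_i$ to obtain $\vdash\pdiff{\funalpha(e_0,\ldots,e_{n-1})}{x}=\sum_{i}\apdiff{\funalpha(e_0,\ldots,x_i,\ldots,e_{n-1})}{x_i}{e_i}\pdiff{e_i}{x}$, and then substitute $(x_0,\ldots,x_{n-1}).\,e$ for a $\funalpha$ chosen outside $\FnV(e_0,\ldots,e_{n-1})$ (a freshness condition on the \emph{function} variable that you also omit, though with a genuinely simultaneous substitution it is less critical). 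Your remaining steps --- renaming the inner binders $x_i$ to fresh $y_i$ and identifying the resulting summands with $\pdiff{e}{x_i}[e_0/x_0,\ldots,e_{n-1}/x_{n-1}]$ up to $\realalpha$-equivalence --- are fine once the abstracts are corrected.
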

\begin{proof}
In one direction assume the general chain rule.
Choose $\funalpha \type n$ not in any of the $\FnV(e_i)$. Using the general chain rule we have:
\[\vdash
\pdiff{\funalpha(e_0,\ldots,e_{n-1}) }{x}  \;\; = \;\; \sum_{i = 0}^{n-1}\apdiff{\funalpha(e_0,\ldots,e_{i-1},x_i, e_{i +1},\ldots,e_{n-1})}{x_i}{e_i}\pdiff{e_i}{x}
\]
%
Then, substituting $(x_0,\ldots,x_{n-1}).\,e $ for $\funalpha$ on the left we have the following identities (recall that we identify $\realalpha$-equivalent expressions):
\[\begin{array}{lcll}
\pdiff{\funalpha(e_0,\ldots,e_{n-1}) }{x}[(x_0,\ldots,x_{n-1}).\,e/\funalpha] & = & 
  \pdiff{\funalpha(e_0,\ldots,e_{n-1}) [(x_0,\ldots,x_{n-1}).\,e/\funalpha] }{x} & (\mbox{\small as  $x\notin \FV(e)\backslash\{x_0,\ldots,x_{n-1}\}$})\\
  & = &    \pdiff{e[e_0/x_0,\ldots,e_{n-1}/x_{n-1}] }{x} & 
  (\mbox{\small as  $\funalpha \notin \FnV(e_0, \ldots, e_{n-1}$}))
\end{array}
\]
%

Next, substituting on the right, we have the identities:
\[\begin{array}{lll}
\left ( \sum_{i = 0}^{n-1}\apdiff{\funalpha(e_0,\ldots,e_{i-1},x_i, e_{i +1},
\ldots,e_{n-1})}{x_i}{e_i}\pdiff{e_i}{x}\right )[(x_0,\ldots,x_{n-1}).\,e/\funalpha] \\
\hspace{100pt}  \;\; = \;\; \sum_{i = 0}^{n-1}\left (\apdiff{\funalpha(e_0,\ldots,e_{i-1},x_i, e_{i +1},
\ldots,e_{n-1})}{x_i}{e_i}[(x_0,\ldots,x_{n-1}).\,e/\funalpha]
\right )\pdiff{e_i}{x}  
\\ \hspace{300pt}(\mbox{\small as  $\funalpha \notin \FnV(e_i)$, for $i = 0,\ldots, n-1$}) \\
 \hspace{100pt} \;\; = \;\; \sum_{i = 0}^{n-1}\left ( \apdiff{\funalpha(e_0,\ldots,e_{i-1},x_i, e_{i +1},
\ldots,e_{n-1})[(x_0,\ldots,x_{n-1}).\,e/\funalpha]}{x_i}{e_i} \right )\pdiff{e_i}{x} 
\\ \hspace{300pt}(\mbox{\small as $x_i\notin \FV(e)\backslash \{x_0,\ldots, x_{n-1}\}$})\\
 \hspace{100pt}  \;\; = \;\; \sum_{i = 0}^{n-1} \apdiff{e[e_0/x_0,\ldots,e_{i-1}/x_{i-1},x_i/x_i, e_{i 
 +1}/x_{i+1},\ldots,e_{n-1}/x_{n-1}]}{x_i}{e_i} \pdiff{e_i}{x}  
\\ \hspace{300pt}(\mbox{\small as $\funalpha \notin \FnV(e_0,\ldots,e_{i-1}, e_{i +1},\ldots,e_{n-1})$})
\\
 \hspace{100pt}  \;\; = \;\; \sum_{i = 0}^{n-1} \apdiff{e[e_0/x_0,\ldots,e_{i-1}/x_{i-1}, e_{i +1}/x_{i+1},\ldots,e_{n-1}/x_{n-1}]}{x_i}{e_i} \pdiff{e_i}{x}  \\
\hspace{100pt}  \;\; = \;\; \sum_{i = 0}^{n-1} \pdiff{e}{x_i}[e_0/x_0,\ldots,e_{n-1}/x_{n-1}] 
\pdiff{e_i}{x}
\end{array}
\]
and the conclusion follows. 

In the other direction, assuming the substitution principle, take $e$ to be $\funalpha(x_0,\ldots, x_{n-1})$ (when $\FV(e) \backslash \{x_0,\ldots, x_{n-1}\} = \emptyset$) and $e_i$ to be $\funbeta_i(x)$ for $i = 0, n-1$, to obtain:
\[\vdash \pdiff{\funalpha(\funbeta_0(x), \ldots, \funbeta_{n-1}(x))}{x}
\;\; = \;\; 
\sum_{i = 0}^{n-1} \pdiff{\funalpha(x_0,\ldots, x_{n-1})}{x_i}[\funbeta_0(x)/x_0,\ldots,\funbeta_{n-1}(x)/x_{n-1}] \pdiff{\funbeta_i(x)}{x}\]
As 
\[\pdiff{\funalpha(x_0,\ldots, x_{n-1})}{x_i}[\funbeta_0(x)/x_0,\ldots,\funbeta_{n-1}(x)/x_{n-1}]\]
and
\[\apdiff{\funalpha(\funbeta_0(x),\ldots, \funbeta_{i-1}(x),x_i, \funbeta_{i +1}(x),\ldots, \funbeta_{n-1}(x))}{x_i}{\funbeta_i(x)}\]
are $\realalpha$-equivalent, we are done.
\end{proof}
A \emph{(real) polynomial} is (as usual) an  expression containing no function variables or partial differentiations; it is a \emph{natural number} polynomial if all its constants  are natural numbers. We write 
$P(x_0, \ldots, x_{n-1})$ to indicate that $P$ is  a real polynomial whose (necessarily free) variables are included in $\{x_0, \ldots, x_{n-1}\}$; for expressions  $e_0,\ldots, e_{n-1}$ we may then write  $P(e_0, \ldots, e_{n-1})$ for $P[e_0/x_0, \ldots, e_{n-1}/x_{n-1}]$.
We write $P \sim P'$  to mean that the polynomials $P$ and $P'$ are equal modulo the axioms for commutative rings and the addition and multiplication tables for the real constants.

We remark that our equational system is an instance of a second-order equational logic for binding. This logic has \emph{operator constants} $\op$ of arities of the form $(b_0,\ldots,b_{n-1}; m)$ ($b_i,m \geq 0$). These are used to form a compound expression from $n$ abstracts, of respective arities $b_0,\ldots,b_{n-1}$, and $m$ expressions. With this notation,  $+$ has arity $(;2)$ and $\mathrm{PDiff}$ has arity $(1;1)$.
This system is single-sorted; there is a natural generalisation to a multisorted version. 

In~\cite{FH10} Fiore and Hur gave a multisorted second-order equational logic, and in~\cite{FM10} Fiore and Mahmoud considered the single-sorted case. The differences between their systems and ours are inessential: what we term `function variables' they term `metavariables'; whereas our function variables come with preassigned arities, theirs do not, and they rather utilise suitable environments; their arities are of the simpler form $(b_0,\ldots,b_{n-1})$ but are no less general, as the extra arguments become 0-ary abstracts; and whereas we have both a substitution operation for function variables and a context rule, they employ  a single equivalent rule.  A single substitution version of their equivalent rule is:
\[\frac{e_0 = e_1}{e_0[(x_0,\ldots,x_{n-1}).e'/\funalpha] = e_1[(x_0,\ldots,x_{n-1}).e'/\funalpha]}\]
They employ a simultaneous substitution version of this rule, which is, in any case, equivalent to the single substitution rule.

Returning to our single-sorted version of second-order equational logic, we say that an equational theory is \emph{equationally inconsistent} if $x = y$ is a theorem (with $x$, $y$ different), and we say that an an equational theory is \emph{equationally complete} if any extension of it by a single non-theorem is equationally inconsistent. Note that this is a syntactic criterion, independent of any particular model. 

Polynomials provide an example of equational completeness, though one without a binding operator. The theory is that of commutative rings with constants for all reals and addition and multiplication tables (i.e., our theory less partial differentiation). The idea of the proof is that distinct polynomials have distinct values  for suitable choices of values for their variables. This can be formalised within the logic, and so, given an equation between distinct polynomials, two distinct reals can be proved equal, and that, in turn, enables one to prove $x = y$.

The $\lambda\realbeta\eta$-calculus provides a partial example with a binding operator. It can be formalised as a second-order theory with a binary application operator $\mathrm{ap}\type (;2)$ and a unary lambda abstraction operator $\lambda\type (1;0)$ and the two equations
\[\mathrm{ap}(\lambda ((x).\,\funalpha(x)), y) = \funalpha(y) 
\qquad 
\lambda((y).\, \mathrm{ap}(x,y)) = x\] 
While not equationally complete, it is partially so in the sense that, by B\"{o}hm's theorem~\cite{Bar84}, no two distinct $\realbeta\eta$-normal forms can be consistently equated. 
The theory of the Beta-Bernoulli process~\cite{SSY} also has binding operators; it is formulated using  an equational logic that is a bit different than ours in order to accommodate algebraic effects. 
The theory is almost equationally complete, in the sense that it has just one consistent extension. In this extension all quantitative information about probabilities is lost, and so can be ruled out for the purposes at hand.

\acomment{I have mentally checked assertions on equivalence}

\section{Semantics}

We give a semantics of our partial differentiation expressions using smooth functions $h\type \R^n \rightarrow \R$. 
An \emph{environment}  $\rho$ is a function from variables to reals; 
a \emph{function environment} $\varphi$ is a function from function variables to smooth functions that sends function variables of arity $n$ to smooth functions on $\R^n$. We write $\rho[r_0/x_0,  \ldots, r_{n-1}/x_{n-1}]$ for the environment with value $r_i$ on $x_i$ (for $i = 0,n-1$)  and 
value $\rho(x)$ on any other variable $x$;  function environments 
$\varphi[h_0/\funalpha_0,\ldots, h_{n-1}/\funalpha_{n-1}]$, where $h_i$ is $m$-ary if $\funalpha_i\type m$, are defined similarly.  
We write $0$ for both the constantly $0$ environment and the function environment yielding constantly $0$ functions.

We define the denotation 
$\den{e}\varphi\rho$ of an expression relative to a function environment and an environment by the following clauses:

\[\begin{array}{lcl}
\den{r}\varphi\rho & \;\;  = \;\; & r\\
\den{x}\varphi\rho & \;\;  = \;\; & \rho(x)\\
\den{e + e'}\varphi\rho & \;\;  = \;\; & \den{e}\varphi\rho + \den{e'}\varphi\rho\\
\den{ee'}\varphi\rho & \;\;  = \;\; & \den{e}\varphi\rho \times \den{e'}\varphi\rho\\
\den{\funalpha(e_0,\ldots,e_{n-1})}\varphi\rho & \;\;  = \;\; & \varphi(\funalpha)(\den{e_0}\varphi\rho , \ldots, \den{e_{n-1}}\varphi\rho )\\
\den{\mathrm{PDiff}(x.\, e, e')}\varphi\rho & \;\;  = \;\; & D(r\in \R \mapsto \den{e}\varphi\rho[r/x] )(\den{e'}\varphi\rho )\\
\end{array}\]

Prima facie this definition may not be proper as the last clause makes sense only if the function 
$r\in \R \mapsto \den{e}\varphi\rho[r/x]$  is differentiable. However one proves by structural induction that the definition is proper and, in addition, that for any variables $x_0,\ldots,x_{n-1}$ the function sending
$r_0,\ldots,r_{n-1} \in \R$ to $\den{e}\varphi\rho[r_0/x_0,\ldots,r_{n-1}/x_{n-1}] $ is smooth.

The denotation $\den{e}\varphi\rho$ of an expression depends only on the values ascribed to its free function variables by $\varphi$, and  its free variables by $\rho$. When $e$ has no function variables, we just write $\den{e}\rho$ for its denotation, omitting 
$\varphi$, and when it is closed (has no free variables of either kind) we just write $\den{e}$. 
As usual, we write
$\models e_0 = e_1$
to mean that $e_0$ and $e_1$ have the same denotation, given any function environment and any environment.



We omit the proof of the following standard substitution lemma:

\begin{lemma} \label{semsub}
\lskip
\begin{enumerate}
 \item Variable substitution and denotation commute, that is, we have:
 \[\den{e[e_0/x_0,\ldots, e_{n-1}/x_{n-1}]}\varphi\rho \;\;  = \;\; 
            \den{e}\varphi\rho[\den{e_0}\varphi\rho/x_0,\ldots, \den{e_{n-1}}\varphi\rho/x_{n-1}]\]
 \item  Function variable substitution and denotation commute, that is, we have:
 \[\begin{array}{c}\den{e[(x_0,\ldots,x_{n-1}).e'/\funalpha]}\varphi\rho \;\;  = \;\; 
       \den{e}\varphi[
       u_0,\ldots,u_{n-1}\! \mapsto \den{e'}\varphi\rho[u_0/x_0,\ldots,u_{n-1}/x_{n-1}]
       /\funalpha]\rho
       \end{array}\]
\end{enumerate}
\end{lemma}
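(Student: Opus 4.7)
The plan is to prove both parts simultaneously by structural induction on the expression $e$, as the two parts are essentially independent for most cases but clause (2) will invoke clause (1) at the one non-trivial step. Before starting, I would, as is standard, silently $\alpha$-rename all bound variables of $e$ so that the bound variable $x$ occurring in any subexpression $\mathrm{PDiff}(x.e_0, e_1)$ is distinct from every $x_i$ and does not appear in $\FV(e_i')$ (in case (1)) or in $\FV(e')$ (in case (2)); the side conditions on substitution permit this, and both sides of each equation respect $\alpha$-equivalence. I would also make use of the easily verified fact that $\den{e}\varphi\rho$ depends only on $\varphi \upharpoonright \FnV(e)$ and $\rho \upharpoonright \FV(e)$.

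For part (1), the base cases $e = r$ and $e = y$ (whether or not $y$ is one of the $x_i$) follow immediately from the semantic clauses. The cases $e = e'_0 + e'_1$ and $e = e'_0 e'_1$ are routine applications of the induction hypothesis. For $e = \funbeta(e'_0, \ldots, e'_{m-1})$, one computes both sides using the semantic clause for function application and invokes the induction hypothesis on each $e'_j$. For $e = \mathrm{PDiff}(y.e'_0, e'_1)$ with $y$ fresh as arranged, one unfolds the definition of substitution past the binder, applies the semantic clause on both sides, and observes that the resulting outer derivative $D(r \mapsto \den{e'_0[\cdots]}\varphi\rho[r/y])$ matches by the induction hypothesis applied pointwise in $r$; here one uses that $y \notin \FV(e_i)$ so that $\den{e_i}\varphi\rho[r/y] = \den{e_i}\varphi\rho$.

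For part (2), all cases except the substituted function variable are handled as above: real constants, ordinary variables, sums, products, a function variable $\funbeta \neq \funalpha$, and the $\mathrm{PDiff}$ binder (with the same fresh-variable precaution) all fall to direct unfolding plus the induction hypothesis. The one interesting case is $e = \funalpha(e_0, \ldots, e_{n-1})$: by definition of function-variable substitution the left-hand side becomes $\den{e'[\,e_0[\cdots]/x_0, \ldots, e_{n-1}[\cdots]/x_{n-1}\,]}\varphi\rho$; I would apply part (1) to push the variable substitution through the denotation, obtaining $\den{e'}\varphi\rho[\den{e_0[\cdots]}\varphi\rho/x_0, \ldots]$, and then apply the induction hypothesis of part (2) to each argument $e_i$. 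The right-hand side, by the semantic clause for function application under the modified function environment, equals exactly the same thing after a $\beta$-style unfolding of the assigned smooth function at the arguments $\den{e_i[\cdots]}\varphi\rho$.

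The main obstacle is bookkeeping rather than mathematics: keeping track of which side condition ($x \notin \FV(e_i')$, $x_i \notin \FV(e')$, $\funalpha \notin \FnV(e_i)$, etc.) justifies each rearrangement, and making sure that the silent $\alpha$-renaming and the ``only free variables matter'' observation about $\den{-}$ are invoked exactly where they are needed to match bound variables with fresh parameters $u_i, r$ in the outer derivative clause. Once these conventions are in place, every case reduces to a routine unfolding, which is presumably why the authors chose to omit the proof.
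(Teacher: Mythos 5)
The paper omits the proof of this lemma as standard, and your structural induction is exactly that standard argument; the case analysis, the $\alpha$-renaming conventions, and the use of part (1) to handle the substituted-function-variable case of part (2) are all correct. One small organisational caveat: in the case $e = \funalpha(e_0,\ldots,e_{n-1})$ of part (2) you invoke part (1) for the expression $e'$, which is not a subexpression of $e$, so the two parts cannot literally be proved by a single simultaneous induction on $e$ as you announce --- rather, part (1) must be established in full first (which is unproblematic, since it nowhere depends on part (2)), and then part (2) proved by its own induction using part (1) as an already-proved lemma.
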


Using this lemma, it is then straightforward to prove consistency: \vspace{5pt}
\begin{theorem}[Consistency]\label{consistency} For any expressions $e_0$ and $e_1$ we have:
\[\vdash e_0 = e_1\quad  \implies \quad \models e_0 = e_1 \]
\end{theorem}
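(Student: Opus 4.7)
The plan is straightforward induction on the derivation of $\vdash e_0 = e_1$. For each axiom we must check $\models e_0 = e_1$, and for each inference rule we must check that soundness is preserved. The interesting content is really in verifying soundness of the four differentiation axioms and of the congruence and substitution rules, since the ring axioms and the addition/multiplication tables for the reals hold because the denotation interprets $+$ and juxtaposition as the corresponding real operations.

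First I would dispatch the equality, congruence and substitution rules. The reflexivity, symmetry and transitivity rules are immediate. For the congruence rule for $\mathrm{PDiff}$, fix $\varphi,\rho$; by the induction hypothesis $\den{e_0}\varphi\rho' = \den{e_1}\varphi\rho'$ for every $\rho'$, so in particular the two functions $r\in\R\mapsto\den{e_i}\varphi\rho[r/x]$ ($i=0,1$) are equal as functions of $r$, hence have equal derivatives at $\den{e_0'}\varphi\rho=\den{e_1'}\varphi\rho$. The congruence rules for $+$, $\cdot$ and function application are similarly trivial. For the two substitution rules (variable substitution and function-variable substitution), soundness reduces to Lemma~\ref{semsub}: substituting into an equation produces an equation whose two sides have denotations obtained by evaluating the original equation in a modified (function) environment, where they remain equal.

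Next I would verify each of the four differentiation axioms. For $\partial(x+y)/\partial x = 1$, the function $r\mapsto r+\rho(y)$ has constant derivative $1$. For $\partial(yx)/\partial x = y$, the function $r\mapsto\rho(y)\cdot r$ has derivative $\rho(y)$. For the commutativity axiom, smoothness of $(r,s)\mapsto\varphi(\funalpha)(r,s)$ together with Clairaut's theorem (equality of mixed partials for $C^2$ functions) gives the conclusion. For the binary chain rule, the two sides both compute, via Lemma~\ref{semsub} and the standard multivariate chain rule for smooth functions, the derivative at $\rho(x)$ of $r\mapsto\varphi(\funalpha)(g_0(r),g_1(r))$, where $g_i(r)=\den{\funbeta_i(x)}\varphi\rho[r/x]$; here one uses that each $g_i$ is smooth, a fact established alongside the propriety of the denotation by the structural induction mentioned just after the definition of $\den{\cdot}$.

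The only mildly subtle point, and the one I would be most careful about, is the interplay between $\alpha$-equivalence and the use of $\rho[r/x]$ in the clause for $\mathrm{PDiff}$: one must confirm that the denotation is well-defined on $\alpha$-equivalence classes, so that the congruence rule and the substitution rules, which rely on the side-conditions about bound variables, really do preserve denotations. This is routine but worth stating explicitly before invoking Lemma~\ref{semsub}. Once all cases are in place, closing under the rules yields $\models e_0=e_1$ whenever $\vdash e_0=e_1$.
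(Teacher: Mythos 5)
Your proof is correct and follows exactly the route the paper intends: the paper omits the argument, stating only that consistency is "straightforward" once Lemma~\ref{semsub} is in hand, and your induction on derivations — with the substitution rules handled by that lemma and the four differentiation axioms verified via the derivative of affine maps, Clairaut's theorem, and the multivariate chain rule — is the standard filling-in of that sketch.
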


Any  polynomial $P(x_0,\ldots,x_{n-1})$ 
defines an $n$-ary  polynomial function on the reals, also written $P(x_0,\ldots,x_{n-1})$. Note that 
\[P(r_0,\ldots,r_{n-1})  \;\;  = \;\; \den{P}0[r_0/x_0,\ldots,r_{n-1}/x_{n-1}]\]
%
  and we recall that $P(x_0,\ldots,x_{n-1}) \sim P'(x_0,\ldots,x_{n-1})$ if, and only if, the  two functions $P(x_0,\ldots,x_{n-1})$ and $P'(x_0,\ldots,x_{n-1})$ are  equal.
 
We say that a function environment is a \emph{(natural number) polynomial} environment if all its values are (natural number) polynomial functions and that an environment is a natural number environment if all its values are natural numbers. We write 
\vspace{0pt}
  \[\models_{\mathbb{N}} e_0 = e_1\]
%
$\mbox{  }$\vspace{0pt}\\
to mean that $e_0$ and $e_1$ have the same denotation, given any natural number polynomial function environment,  and any natural number  environment.

Our semantics of the  theory of partial differentiation employs a concrete notion of function, viz.\  smooth functions over the reals. For the general theory of second-order logic, a more abstract, and thereby more general, notion of function is needed. This can be formulated variously in terms of 
abstract clones~\cite{C81},
Lawvere theories,  or, more conceptually, in terms of monoids in a presheaf category of contexts, see~\cite{FPT99,FH10,FM10}. 

\vspace{25pt}

\section{Canonical forms}

To show completeness we need suitable canonical forms $c$;
up  to a suitable equivalence relation 
$c \approx c'$,
they provide normal forms for our theory.

Let $x_0,x_1,\ldots$ be a countably infinite sequence of distinct variables. For any function variable $\funalpha\type n$, argument sequence $m = i_1,\ldots, i_k \in [n]^*$ (where $[n] = \{0,\ldots,n-1\}$), and expressions $e_0,\ldots,e_{n-1}$ we set:
\[ \funalpha_m(e_0,\ldots,e_{n-1}) \;\;  = \;\;  
\pdiff{^k\!\funalpha(x_0, \ldots, x_{n-1})}{x_{i_1}\ldots \partial x_{i_k}}[e_0/x_0,\ldots, e_{n-1}/x_{n-1}]\]
We note that

\[\FV(\funalpha_m(e_0,\ldots,e_{n-1})) = \bigcup_{i=0}^{n-1} \FV(e_i) \qquad\mbox{     and     }\qquad  \FnV(\funalpha_m(e_0,\ldots,e_{n-1})) = \{f\} \cup \bigcup_{i=0}^{n-1} \FnV(e_i) \] 
and that
\[\funalpha_m(e_0,\ldots,e_{n-1})[e/x] \;\;  = \;\;  \funalpha_m(e_0[e/x],\ldots,e_{n-1}[e/x]) \]
We remark that if two such expressions 
$\funalpha_m(e_0,\ldots,e_{n-1})$ 
and 
$\funalpha'_{m'}(e'_0,\ldots,e'_{n'-1})$ 
are equal then so are: $\funalpha$ and $\funalpha'$;  $m$ and $m'$; $n$ and $n'$; and $e_i$ and $e'_i$, for $i =1,n$; this is used implicitly below to ensure the uniqueness of expression case analyses. We also note that, as may be expected, for any $\varphi$ and $\rho$ we have:
\[\den{\funalpha_m(e_0,\ldots,e_{n-1})}\varphi\rho  =  D_m(\varphi(\funalpha))(\den{e_0}\varphi\rho, \ldots, \den{e_{n-1}}\varphi\rho)\]

The chain rule takes the following form for such applications of partial derivatives of functions:
\begin{lemma} \label{diffap} We have:
\[\vdash \pdiff{\funalpha_m(e_0,\ldots,e_{n-1})}{x} \;\; = \;\; \sum_{i = 0}^{n-1} \funalpha_{im}(e_0,\ldots,e_{n-1})\pdiff{e_i}{x}
\]
\end{lemma}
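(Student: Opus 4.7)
The plan is to derive this as a direct instance of the substitution principle in Lemma~\ref{subdiff} (which is equivalent to the general chain rule, Lemma~\ref{facts}(4)). Writing $m = i_1,\ldots,i_k$, set
\[e \;=\; \pdiff{^k\!\funalpha(x_0,\ldots,x_{n-1})}{x_{i_1}\ldots \partial x_{i_k}},\]
so that, by the very definition of $\funalpha_m$, we have $\funalpha_m(e_0,\ldots,e_{n-1}) = e[e_0/x_0,\ldots,e_{n-1}/x_{n-1}]$.

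Before invoking Lemma~\ref{subdiff}, I would check its side condition $x \notin \FV(e)\setminus\{x_0,\ldots,x_{n-1}\}$. Since $\funalpha(x_0,\ldots,x_{n-1})$ has free variables exactly $\{x_0,\ldots,x_{n-1}\}$, and each application of $\pdiffop{x_{i_j}}$ neither binds nor introduces any variable outside that set, we have $\FV(e) \subseteq \{x_0,\ldots,x_{n-1}\}$, so the condition is vacuous. Lemma~\ref{subdiff} then yields
\[\vdash \pdiff{e[e_0/x_0,\ldots,e_{n-1}/x_{n-1}]}{x} \;=\; \sum_{i=0}^{n-1} \pdiff{e}{x_i}[e_0/x_0,\ldots,e_{n-1}/x_{n-1}]\,\pdiff{e_i}{x}.\]

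It remains to identify the two sides with the claimed expressions. The left-hand side is $\pdiff{\funalpha_m(e_0,\ldots,e_{n-1})}{x}$ by the choice of $e$. For each summand on the right, the paper's conventions give $\pdiff{e}{x_i} \;=\; \pdiffop{x_i} e \;=\; \pdiff{^{k+1}\!\funalpha(x_0,\ldots,x_{n-1})}{x_i\,\partial x_{i_1}\ldots \partial x_{i_k}}$ as a syntactic identity (up to $\alpha$-equivalence), so after substituting $e_0,\ldots,e_{n-1}$ for $x_0,\ldots,x_{n-1}$ this becomes $\funalpha_{im}(e_0,\ldots,e_{n-1})$, where $im$ denotes the sequence $i,i_1,\ldots,i_k$. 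Combining these identifications gives the claim. There is no real obstacle here: once Lemma~\ref{subdiff} is in hand, the proof is purely a matter of unfolding the definitional notation for $\funalpha_m$ and verifying that the side condition on free variables is automatic.
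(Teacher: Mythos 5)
Your proof is correct and follows essentially the same route as the paper: unfold $\funalpha_m(e_0,\ldots,e_{n-1})$ as a substitution instance of $e = \pdiff{^k\!\funalpha(x_0,\ldots,x_{n-1})}{x_{i_1}\ldots\partial x_{i_k}}$, apply the substitution principle of Lemma~\ref{subdiff}, and recognise each $\pdiff{e}{x_i}[e_0/x_0,\ldots,e_{n-1}/x_{n-1}]$ as $\funalpha_{im}(e_0,\ldots,e_{n-1})$. Your explicit check that the side condition $x \notin \FV(e)\setminus\{x_0,\ldots,x_{n-1}\}$ is vacuous is a welcome bit of extra care that the paper leaves implicit.
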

\begin{proof} Supposing $m = i_1,\ldots, i_k \in [n]^*$, then we have
\[\begin{array}{lcll}\pdiff{\funalpha_m(e_0,\ldots,e_{n-1})}{x} & = & 
 \pdiffop{x}{\left ( \pdiff{^k\!\funalpha(x_0, \ldots, x_{n-1})}{x_{i_1}\ldots \partial x_{i_k}}[e_0/x_0,\ldots, e_{n-1}/x_{n-1}]\right )}\\
         & = & \sum_{i = 0}^{n-1} \left ({  \pdiffop{x}{ \pdiff{^k\!\funalpha(x_0, \ldots, x_{n-1})}{x_{i_1}\ldots \partial x_{i_k}}}}\right )\![e_0/x_0,\ldots, e_{n-1}/x_{n-1}] \, \pdiff{e_i}{x} &\;\; (\mbox{by Lemma~\ref{subdiff}})\\
         & = & \sum_{i = 0}^{n-1} \funalpha_{im}(e_0,\ldots,e_{n-1})\pdiff{e_i}{x}
\end{array}\]
where the second identity is a provable equality, and the other two are syntactic.
\end{proof}

We define a set of \emph{canonical forms} $c$ and a set of \emph{atomic expressions} $a$ (both sets of expressions) by  simultaneous induction: 


\begin{enumerate}
\item 
 \begin{enumerate}
         \item Any  atomic expression is a canonical expression.
         \item Any $r \in \R$ is a canonical expression, as are $c + c'$ and $ cc'$  if $c$ and $c'$ are.
         \end{enumerate}
\item \begin{enumerate}
         \item Any variable $x$ is an atomic expression.
         \item For any $\funalpha\type n$, $m \in [n]^*$,  and canonical forms $c_0,\ldots,c_{n-1}$,  $\funalpha_m(c_0,\ldots,c_{n-1})$ is an atomic expression.
         
         \end{enumerate}
\end{enumerate}
 The set of  \emph{immediate} atomic subexpressions of a canonical form $c$ is defined by structural recursion:
 \[\Imm(a) = \{a\} \qquad \Imm(r) = \emptyset \qquad \Imm(c + c') = \Imm(cc') = \Imm(c) \cup \Imm(c')\]
We remark that the canonical forms are closed under substitution: $c[c'/x]$ is a  canonical form if $c$ and $c'$ are, as is straightforwardly proved by structural induction on $c$. 


\begin{lemma}  \label{onediff}
For any canonical forms $c$ and $\ov{c}$ there is a canonical form $c'$  
such that:
\[\vdash  \apdiff{c}{x}{\ov{c}} = c'\]
and with
$\FnV(c') \subseteq \FnV(c) \cup \FnV(\ov{c})$
and
$\FV(c') \subseteq \FV(c) \cup \FV(\ov{c})$.

%
%
\end{lemma}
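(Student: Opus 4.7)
The plan is to reduce the lemma to an auxiliary claim about $\pdiff{c}{x}$ alone, and then extend via the variable-substitution rule of equational logic. Specifically, I would first establish the sublemma: for every canonical $c$ and variable $x$ there is a canonical $c^{*}$ with $\vdash \pdiff{c}{x}=c^{*}$, $\FV(c^{*})\subseteq\FV(c)$, and $\FnV(c^{*})\subseteq\FnV(c)$. Granting this, substituting $\ov{c}$ for $x$ in the provable equation $\pdiff{c}{x}=c^{*}$ yields $\pdiff{c}{x}[\ov{c}/x]=c^{*}[\ov{c}/x]$. After the routine $\alpha$-renaming of the bound $x$ inside the $\mathrm{PDiff}$ (the substituted $\ov{c}$ lies outside the binder), the left hand side becomes $\apdiff{c}{x}{\ov{c}}$, while the right hand side is canonical by the remark, just before the lemma, that canonical forms are closed under substitution of canonical forms. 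Setting $c'=c^{*}[\ov{c}/x]$ then finishes the lemma; the two containment conditions follow by combining the containments for $c^{*}$ with the usual bounds on the free (function) variables of a substitution.

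The sublemma itself is proved by structural induction on $c$. For $c=r\in\R$ and for $c$ a variable distinct from $x$, Lemma~\ref{facts}(5) gives $\pdiff{c}{x}=0$; for $c=x$, Lemma~\ref{facts}(1) gives $\pdiff{x}{x}=1$. For $c=c_{1}+c_{2}$ I would apply the substitution principle of Lemma~\ref{subdiff} to $e=x_{0}+x_{1}$ with $e_{0}=c_{1}$, $e_{1}=c_{2}$; the axiom $\pdiff{x+y}{x}=1$, together with commutativity of addition, gives $\pdiff{x_{0}+x_{1}}{x_{i}}=1$ for $i=0,1$, so $\pdiff{c_{1}+c_{2}}{x}$ rewrites to $\pdiff{c_{1}}{x}+\pdiff{c_{2}}{x}$, and the inductive hypothesis replaces each summand by a canonical form. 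The case $c=c_{1}c_{2}$ is analogous, now using the axiom $\pdiff{yx}{x}=y$ to obtain $\pdiff{c_{1}c_{2}}{x}=c_{2}\pdiff{c_{1}}{x}+c_{1}\pdiff{c_{2}}{x}$ before invoking the inductive hypothesis.

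The main case is $c=\funalpha_{m}(c_{0},\ldots,c_{n-1})$, and here Lemma~\ref{diffap} immediately supplies
\[
\pdiff{\funalpha_{m}(c_{0},\ldots,c_{n-1})}{x}\;=\;\sum_{i=0}^{n-1}\funalpha_{im}(c_{0},\ldots,c_{n-1})\pdiff{c_{i}}{x}.
\]
Each $\funalpha_{im}(c_{0},\ldots,c_{n-1})$ is itself atomic, since $im\in[n]^{*}$ and the $c_{j}$ are unchanged and hence still canonical; and the inductive hypothesis supplies canonical replacements for the $\pdiff{c_{i}}{x}$, so the whole right hand side is canonical. This is the one place where the canonical-form discipline has to be checked explicitly; the remaining steps are routine, and the bookkeeping for $\FV$ and $\FnV$ tracks automatically through each construction, since in every clause the free (function) variables of the output are visibly contained in those of the input.
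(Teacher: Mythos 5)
Your proof is correct and follows essentially the same route as the paper: induction on the (size of the) canonical form $c$, using Lemma~\ref{facts} for the constant, variable, sum and product cases, Lemma~\ref{diffap} for the atomic case, and closure of canonical forms under substitution. The only organisational difference is that you first canonicalise $\pdiff{c}{x}$ and then substitute $\ov{c}$ for the free $x$, whereas the paper carries $\ov{c}$ through the induction directly; this is an immaterial variation.
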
 
\begin{proof} The proof is by induction on the size of $c$. The cases where $c$ is a constant, variable, sum, or product make use of Lemma~\ref{facts}. 
This leaves the case of an atomic expression
$\funalpha_m(c_0,\ldots,c_{n-1})$, where we can prove:
\[\begin{array}{lcll} \apdiff{\funalpha_m(c_0,\ldots,c_{n-1})}{x}{\ov{c}} & = & \pdiff{\funalpha_m(c_0,\ldots,c_{n-1})}{x} [\ov{c}/x]\\
&= & \left ( \sum_{i = 0}^{n-1} \funalpha_{im}(c_0,\ldots,c_{n-1})\pdiff{c_i}{x} \right ) [\ov{c}/x]&(\mbox{by Lemma~\ref{diffap}}) \\
&= &  \sum_{i = 0}^{n-1} \funalpha_{im}(c_0 [\ov{c}/x],\ldots,c_{n-1} [\ov{c}/x])\apdiff{c_i}{x}{\ov{c}}\\
\end{array}\]
and recall  the canonical forms are closed under substitution and apply the induction hypothesis to the $c_i$.
\end{proof}

\begin{lemma}[Canonicalisation] \label{canonicalisation} For any expression    $e$ there is a canonical form $\CF(e)$ 
 such that $\vdash e = \CF(e)$ and with $\FnV(\CF(e)) \subseteq \FnV(e)$ and $\FV(\CF(e)) \subseteq \FV(e)$.
\end{lemma}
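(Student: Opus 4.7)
The plan is to proceed by structural induction on the expression $e$, defining $\CF(e)$ recursively and verifying in each case that the result is a canonical form, that the provable equation $\vdash e = \CF(e)$ holds, and that the free (function) variable conditions are preserved. The essential work has already been done in Lemma~\ref{onediff}; this lemma is mostly a bookkeeping argument that reduces the problem of canonicalising an arbitrary expression to the problem of canonicalising a single $\mathrm{PDiff}$ applied to already-canonical arguments.

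For the base cases, I take $\CF(r) = r$ and $\CF(x) = x$; both are canonical (a constant, and a variable which is atomic), and the stated conditions hold trivially. For sums and products I set $\CF(e_0 + e_1) = \CF(e_0) + \CF(e_1)$ and $\CF(e_0 e_1) = \CF(e_0)\CF(e_1)$; these are canonical by clause 1(b), provable equality follows from the induction hypotheses and the congruence rules for $+$ and for multiplication, and the free variable conditions pass through directly. For an application $\funalpha(e_0,\ldots,e_{n-1})$ of a function variable of arity $n$, I observe that with $m$ taken to be the empty argument sequence $\epsilon \in [n]^*$, one has $\funalpha_\epsilon(e_0,\ldots,e_{n-1}) = \funalpha(e_0,\ldots,e_{n-1})$ by definition (since the iterated partial derivative of length zero is the identity and then the substitution collapses). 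So I set $\CF(\funalpha(e_0,\ldots,e_{n-1})) = \funalpha_\epsilon(\CF(e_0),\ldots,\CF(e_{n-1}))$; this is atomic, hence canonical, provable by the congruence rule for function application, and the containments on $\FV$ and $\FnV$ follow from the induction hypotheses together with the fact (noted earlier in the paper) that $\FV(\funalpha_m(c_0,\ldots,c_{n-1})) = \bigcup \FV(c_i)$ and $\FnV(\funalpha_m(c_0,\ldots,c_{n-1})) = \{\funalpha\} \cup \bigcup \FnV(c_i)$.

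The only nontrivial case is $e = \mathrm{PDiff}(x.\,e_0,e_1)$. By the induction hypothesis there are canonical forms $\CF(e_0)$ and $\CF(e_1)$ with $\vdash e_0 = \CF(e_0)$ and $\vdash e_1 = \CF(e_1)$ and the appropriate variable inclusions. The congruence rule for $\mathrm{PDiff}$ then gives $\vdash \mathrm{PDiff}(x.\,e_0,e_1) = \mathrm{PDiff}(x.\,\CF(e_0),\CF(e_1)) = \apdiff{\CF(e_0)}{x}{\CF(e_1)}$. Now I apply Lemma~\ref{onediff} to the canonical forms $\CF(e_0)$ and $\CF(e_1)$ to obtain a canonical form $c'$ with $\vdash \apdiff{\CF(e_0)}{x}{\CF(e_1)} = c'$ and with $\FnV(c') \subseteq \FnV(\CF(e_0)) \cup \FnV(\CF(e_1))$ and $\FV(c') \subseteq \FV(\CF(e_0)) \cup \FV(\CF(e_1))$. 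I set $\CF(e) = c'$; chaining the two provable equations gives $\vdash e = \CF(e)$, and composing the variable inclusions with those from the induction hypothesis gives $\FnV(\CF(e)) \subseteq \FnV(e_0) \cup \FnV(e_1) = \FnV(e)$ and similarly for $\FV$ (noting that $x \notin \FV(\mathrm{PDiff}(x.\,e_0,e_1))$ need not be separately enforced, since Lemma~\ref{onediff} already yields a canonical form whose free variables are included in those of the inputs).

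There is no real obstacle: since Lemma~\ref{onediff} already does the hard work of pushing a single differentiation through a canonical form, this proof is essentially a case analysis in which the $\mathrm{PDiff}$ case is handled by congruence followed by one appeal to that lemma. The only point demanding a little care is the treatment of the bare function-variable application, where one must notice that $\funalpha(\cdot)$ is itself of the form $\funalpha_m(\cdot)$ with $m = \epsilon$, so that it already qualifies as atomic.
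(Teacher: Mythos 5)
Your proof is correct and follows the paper's own argument exactly: structural induction on $e$, with the sum/product cases by closure of canonical forms under $+$ and $\cdot$, the function-application case by observing that $\funalpha(c_0,\ldots,c_{n-1}) = \funalpha_\epsilon(c_0,\ldots,c_{n-1})$ is atomic, and the $\mathrm{PDiff}$ case discharged by congruence followed by a single appeal to Lemma~\ref{onediff}. The one imprecision (shared with, indeed inherited from, the paper) is in your parenthetical: the bound $\FV(c')\subseteq\FV(\CF(e_0))\cup\FV(\CF(e_1))$ as literally stated in Lemma~\ref{onediff} may still admit the bound variable $x$ even though $x\notin\FV(\mathrm{PDiff}(x.\,e_0,e_1))$ when $x\notin\FV(e_1)$; what is really needed, and what the construction in that lemma actually delivers via its final substitution $[\ov{c}/x]$, is the sharper containment $\FV(c')\subseteq(\FV(c)\setminus\{x\})\cup\FV(\ov{c})$.
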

\begin{proof} We prove this by structural induction on $e$. 
If $e$ is a variable or a constant it is already a canonical form.
The cases where $e$ is a sum or a product follow from the fact that the canonical forms are closed under sums and products. 
The case where $e$ has the form $\funalpha(e_0,\ldots, e_{n-1})$ is immediate from the induction hypothesis.
Finally, the case where $e$ has the form $ \apdiff{e_0}{x}{e_1}$ is handled using Lemma~\ref{onediff}, and the induction hypothesis.
\end{proof}
We have the following corollary of the canonicalisation lemma:
\begin{corollary} For any closed expression $e$, we have:
\[\vdash e = \den{e}\]
\end{corollary}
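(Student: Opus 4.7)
The plan is to reduce a closed expression to a single real constant by canonicalisation, evaluate that canonical form using the arithmetic tables, and then identify the resulting constant with the denotation via the consistency theorem.

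First, I would apply the canonicalisation lemma (Lemma~\ref{canonicalisation}) to obtain a canonical form $\CF(e)$ with $\vdash e = \CF(e)$, $\FV(\CF(e)) \subseteq \FV(e) = \emptyset$, and $\FnV(\CF(e)) \subseteq \FnV(e) = \emptyset$. Thus $\CF(e)$ is a closed canonical form containing no function variables.

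Next, I would observe that every atomic expression in the grammar is either a variable $x$ (hence has $x \in \FV$) or an application $\funalpha_m(c_0,\ldots,c_{n-1})$ (hence has $\funalpha \in \FnV$). So a closed canonical form without function variables has no atomic subexpressions, which, by a straightforward structural induction on canonical forms, means it is built solely from real constants using $+$ and multiplication. I would then prove by structural induction on such a pure-constant canonical form $c$ that $\vdash c = r$ for a unique $r \in \R$, using the addition and multiplication tables for the reals which are axioms of the theory: for a constant the claim is immediate, and for $c_0 + c_1$ (respectively $c_0 c_1$) the induction hypothesis gives constants $r_0, r_1$, and one further application of the corresponding table yields the required real.

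Combining these, we get $\vdash e = \CF(e) = r$ for some $r \in \R$. By the consistency theorem (Theorem~\ref{consistency}) applied to both of these proved equations, $\den{e} = \den{\CF(e)} = \den{r} = r$, so $\vdash e = \den{e}$ as required. There is no real obstacle here; the only point requiring care is the inductive argument that closed canonical forms without function variables reduce to pure arithmetic expressions, which follows cleanly from inspecting the two clauses defining atomic expressions.
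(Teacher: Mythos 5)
Your proof is correct and follows essentially the same route as the paper: canonicalise, observe that a closed canonical form with no function variables is a closed polynomial hence provably equal to a real constant via the arithmetic tables, and identify that constant with $\den{e}$ by consistency. You simply spell out in more detail the steps the paper leaves implicit.
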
 \label{closed}
\begin{proof} As $e$ is closed, the same, by Lemma~\ref{canonicalisation}, is true of $\CF(e)$, which latter must therefore have no immediate atomic subexpressions or free variables, and so is a closed polynomial. There is therefore a real $r$ such that $\vdash \CF(e) = r$ and so, by the lemma, $\vdash e = r$. By consistency we have $r = \den{e}$.
\end{proof}

We next define the equivalence relations that will allow us to
 characterise provable equality between canonical forms. We also define 
  polynomials that will  be used to formulate our Hermite interpolation problems.

For any $m,m' \in [n]^*$ we write $m \sim m'$ to mean that $m$ is a permutation of $m'$.  
For any atomic expressions $a, a'$ we write $a \approx a'$ to mean that 
\begin{itemize}
\item $a$ and $a'$ are identical variables, or else 
\item
for some function variable $\funalpha\type n$, and for some  $m \sim m'$,  they have the forms $\funalpha_m(c_0,\ldots,c_{n-1})$ and $\funalpha_{m'}(c'_0,\ldots,c'_{n-1})$ with $\vdash c_i = c'_i$ for $i = 0,n-1$.
\end{itemize}
We evidently have:
\begin{equation} \label{a=} a \approx a' \implies \vdash a = a' \hfill \end{equation}

We now fix  an assignment $v_a$ of variables (termed \emph{separation} variables) to atomic expressions such that:
\[v_a = v_{a'} \iff a \approx a'\]
%
For any canonical form $c$, we define its \emph{node}  polynomial $P_c$ by structural induction:
\[P_r = r\; (r \in \R) \qquad  P_a = v_a \; 
\qquad P_{c + c'} = P_c + P_{c'} \qquad P_{c c'} = P_c P_{c'}\]
The separation variables occurring in $P_c$ are the $v_a$ 
with $a$ an immediate atomic subexpression of $c$. 

%
%

We then define our equivalence relation between canonical forms by:
\[c \approx c' \quad  \iff \quad P_c \sim P_{c'}\]
Note that this relation only depends on the choice of the $v_a$ for the immediate atomic expressions of $c$ and $c'$.
Note too that the relation extends that between atomic expressions, i.e.,  $a \approx a'$ holds with $a, a'$ taken as atomic expressions iff it does when they are taken as canonical forms. 
%



\begin{lemma} \label{ae}
\lskip
 \begin{enumerate} \item For any canonical expression $c$ with node polynomial $P_{c}(v_{a_0},\ldots,v_{a_{m-1}})$, for atomic expressions $a_0,\ldots, a_{m-1}$, we have:
\[\vdash c = P_{c}(a_0,\ldots,a_{m-1})\]
%
\item For any canonical expressions $c$ and $c'$ we have:
\[c \approx c' \implies \vdash c = c'\]
\end{enumerate}
\end{lemma}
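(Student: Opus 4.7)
The plan is to prove both parts by straightforward structural induction, with part~1 doing the heavy lifting and part~2 following as a quick corollary by exploiting that $\sim$-equivalence of polynomials in the separation variables is preserved under substitution.

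For part~1, I would induct on the size of the canonical form $c$. If $c$ is a constant $r$, then $P_c = r$ involves no separation variables and the claim is immediate. If $c$ is atomic, then $P_c = v_c$, its only separation variable is $v_c$, and any representative $a_0$ with $v_{a_0} = v_c$ satisfies $a_0 \approx c$, so $\vdash c = a_0 = P_c(a_0)$ by (\ref{a=}). For $c = c_1 + c_2$ (the product case is analogous), we have $P_c = P_{c_1} + P_{c_2}$, and every separation variable in $P_{c_1}$ or $P_{c_2}$ appears in the list $v_{a_0},\ldots,v_{a_{m-1}}$ of separation variables of $P_c$. Given any choice of representative atoms $a_0,\ldots,a_{m-1}$ for these, the induced sublists are valid representative choices for the separation variables of $P_{c_1}$ and $P_{c_2}$ respectively, so the inductive hypothesis gives $\vdash c_i = P_{c_i}(\text{sublist})$ for $i = 1,2$. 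Congruence of $+$ then yields $\vdash c = P_{c_1}(\ldots) + P_{c_2}(\ldots) = P_c(a_0,\ldots,a_{m-1})$.

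For part~2, suppose $c \approx c'$, so $P_c \sim P_{c'}$. Take $v_{a_0},\ldots,v_{a_{m-1}}$ to be a common list of separation variables containing those appearing in either $P_c$ or $P_{c'}$ (viewing each polynomial as lying in this larger variable set by treating absent variables as having zero coefficient). Choose any representatives $a_0,\ldots,a_{m-1}$. By part~1, $\vdash c = P_c(a_0,\ldots,a_{m-1})$ and $\vdash c' = P_{c'}(a_0,\ldots,a_{m-1})$. Since $P_c \sim P_{c'}$ means the two polynomial expressions are provably equal from the commutative ring axioms and the real addition/multiplication tables, which are themselves axioms of our theory, the substitution rule of the equational logic yields $\vdash P_c(a_0,\ldots,a_{m-1}) = P_{c'}(a_0,\ldots,a_{m-1})$. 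Chaining the three derivations gives $\vdash c = c'$.

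The only genuine obstacle is bookkeeping in the sum/product case of part~1: one must verify that the ``same atom for the same separation variable'' convention makes the representative choices for $P_{c_1}$ and $P_{c_2}$ simultaneously compatible with the chosen representatives for $P_c$. This is immediate once one observes that separation variables are determined solely by $\approx$-equivalence class, so the inductive hypothesis can be applied to the restriction of the chosen list. The rest is bookkeeping, with the substance provided by equation~(\ref{a=}) and the fact that $\sim$-equivalence lives inside the equational theory.
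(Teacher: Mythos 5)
Your proposal is correct and follows essentially the same route as the paper: part~1 by structural induction on $c$ with the atomic case resolved via $v_{a_i}=v_a \Rightarrow a_i\approx a \Rightarrow\; \vdash a_i = a$, and part~2 by instantiating the provable ring identity $P_c \sim P_{c'}$ at the atoms and chaining with part~1. The paper's proof is just a terser version of the same argument, so nothing further is needed.
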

\begin{proof} 
The first part  is established by a straightforward structural induction. In the case where $c$ is an atomic expression $a$, we have $v_a = P_c = v_{a_0}$ and so $a \approx a_0$ and so $\vdash a = a_0$, i.e., $\vdash c = P_c(a_0)$.

 For the second part, suppose that we have canonical expressions $c$ and $c'$ such that $c \approx c'$ (and so $P_c \sim P_{c'}$).
Let $v_{a_0},\ldots,v_{a_{m-1}} \; (j = 0,m)$ include all the variables of $P_c$ and $P_{c'}$.  As $P_c \sim P_{c'}$, we have:
\[\vdash P_c[a_0/v_{a_0},\ldots,a_{m-1}/v_{a_{m-1}}] = P_{c'}[a_0/v_{a_0},\ldots,a_{m-1}/v_{a_{m-1}}]\]
Applying  the first part to $c$ and $c'$, it  follows that $\vdash c = c'$, as required.
\end{proof}

\section{Completeness}

To establish completeness we use environments distinguishing non-equivalent canonical forms, i.e., canonical forms $c$ and $c'$ such that  $c \not\approx c'$. We first need  to be able to distinguish finite sets of non-equivalent polynomials  (these will be the node polynomials of sub-canonical forms of $c$ or $c'$):
%
%
%
\begin{lemma} \label{poly-dif}  Let $P_i(x_0,\ldots, x_{m-1})$ be $n$  mutually inequivalent  polynomials. 
Then, for some natural number choices of   $x_0,\ldots, x_{m-1}$, they take on different values.
\end{lemma}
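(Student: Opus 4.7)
The plan is to reduce the problem to the standard fact that a nonzero real polynomial cannot vanish on all of $\mathbb{N}^m$, and then apply it to the product of all pairwise differences of the $P_i$.

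First, I would form the $\binom{n}{2}$ difference polynomials $Q_{ij}(x_0,\ldots,x_{m-1}) = P_i(x_0,\ldots,x_{m-1}) - P_j(x_0,\ldots,x_{m-1})$ for $i < j$. By hypothesis $P_i \not\sim P_j$, so each $Q_{ij}$ is a nonzero element of $\mathbb{R}[x_0,\ldots,x_{m-1}]$. Since this ring is an integral domain, the product $Q = \prod_{i<j} Q_{ij}$ is also a nonzero polynomial. Observe that if $(n_0,\ldots,n_{m-1}) \in \mathbb{N}^m$ is a point at which $Q$ does not vanish, then none of the $Q_{ij}$ vanish there either, so the values $P_i(n_0,\ldots,n_{m-1})$ are pairwise distinct, which is exactly what we need.

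It remains to show the auxiliary claim that any nonzero polynomial $R \in \mathbb{R}[x_0,\ldots,x_{m-1}]$ takes a nonzero value at some point of $\mathbb{N}^m$. I would prove this by induction on $m$. In the base case $m = 0$, the polynomial is a nonzero real constant and there is nothing to do; the case $m = 1$ reduces to the fact that a nonzero univariate polynomial of degree $d$ has at most $d$ roots, so all but finitely many natural numbers avoid them. For the inductive step, regard $R$ as a polynomial in $x_{m-1}$ with coefficients $R_k(x_0,\ldots,x_{m-2}) \in \mathbb{R}[x_0,\ldots,x_{m-2}]$; at least one $R_k$ is nonzero, so by the induction hypothesis one can choose $(n_0,\ldots,n_{m-2}) \in \mathbb{N}^{m-1}$ with $R_k(n_0,\ldots,n_{m-2}) \neq 0$. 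Specialising gives a nonzero univariate polynomial in $x_{m-1}$, and then the base case supplies a natural number $n_{m-1}$ on which it does not vanish.

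There is no serious obstacle here: the argument is standard and the only substantive ingredient is that $\mathbb{R}[x_0,\ldots,x_{m-1}]$ is an integral domain together with the finite-roots fact in one variable. If anything merits care, it is making explicit that the inductive step requires the natural numbers to be infinite (so that the finitely many roots of each specialised univariate polynomial can be avoided), which is of course the case.
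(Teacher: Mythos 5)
Your proof is correct and follows essentially the same route as the paper: form the pairwise differences $Q_{ij}=P_i-P_j$, take their product $Q$, note it is a nonzero polynomial, and find a point of $\mathbb{N}^m$ where $Q$ does not vanish. The paper leaves the last step implicit; your induction on the number of variables simply fills in that standard detail.
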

\begin{proof} 
None of the polynomials $Q_{ij} = P_i - P_j$ are identically $0$ ($i \neq j$), and so neither is $Q = \Pi_{ij}Q_{ij}$. So $Q$ is non-zero for some natural number choices of   $x_0,\ldots, x_{m-1}$. That choice differentiates  distinct $P_i$. 
%
\end{proof}

We next need to be able to solve multivariate Hermite interpolation problems
in order to give prescribed values to atomic expressions.
A \emph{multivariate Hermite interpolation problem of dimension $d \geq 0$} is given by: 
\begin{itemize}
\item A finite set of  \emph{nodes} $x_i \in \R^d$ ($i=1.k$), and
\item  For each node $x_i$, finitely many \emph{conditions}  $D_{m_{ij}}(h)(x_i) = r_{ij}$ 
($m_{ij} \in [d]^*, r_{ij} \in \R$) on the partial derivatives of a function $h\type \R^d \rightarrow \R$. The set of conditions must be \emph{consistent}, in the sense that each $r_{ij}$ is determined by the node $x_i$ and the  permutation equivalence class of $m_{ij}$.
\end{itemize} 
A \emph{solution} to this problem is a $d$-ary polynomial function satisfying all the conditions. 
According to a theorem of Severi~\cite{Sev21} (and see~\cite{Lor92,Lor00}) every such problem has a solution by a polynomial of degree $k(\max(|m_{ij}|) + 1) -1$.

\acomment{In above it is assumed that the $D_{m_{ij}}(f)$ notation is standard}

A finite set $C$ of canonical forms is \emph{saturated} if the following two conditions hold:
\begin{enumerate}
\item For any canonical form $c$ and immediate atomic subexpression $a$ of $c$
\[c \in C \; \implies \; a \in C \]
\item \[\funalpha_m(c_0,\ldots,c_{n-1}) \in C \; \implies \; c_0,\ldots,c_{n-1} \in C\]
\end{enumerate}


Every finite set of canonical forms can evidently be extended to a finite saturated set of canonical forms.

\begin{theorem}[Polynomial Separation] \label{sep} 
Let $C$ be a finite set of canonical forms. Then there is a polynomial function environment  $\varphi$ and an environment $\rho$ that distinguish any two inequivalent elements of  $C$.
\end{theorem}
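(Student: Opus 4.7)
The plan is to encode the problem as one of Hermite interpolation. I will extend $C$ to a finite saturated set $C^+$ and name, once and for all, the separation variables $v_a$ for the atomic expressions $a \in C^+$. By definition, the node polynomials $\{P_c : c \in C^+\}$ are finitely many polynomials in these separation variables, with $c \approx c'$ in $C^+$ being the same as $P_c \sim P_{c'}$. Applying Lemma~\ref{poly-dif} to the finitely many pairwise inequivalent node polynomials, I can pick natural numbers $n_a \in \Nat$ such that the assignment $t(c) := P_c(\ldots, n_a, \ldots)$ separates inequivalent members of $C^+$; in particular $t(a) = n_a$, while $t$ also satisfies $t(c + c') = t(c) + t(c')$, $t(c c') = t(c) t(c')$, and $t(r) = r$ directly from the definition of the node polynomials.

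Next I would realise these target values semantically. Set $\rho(x) = n_x$ whenever $x$ is a variable lying in $C^+$ (and arbitrarily otherwise). For each function variable $\funalpha$ appearing in $C^+$, form the multivariate Hermite interpolation problem whose nodes are the tuples $(t(c_0), \ldots, t(c_{n-1}))$ arising from the atomic expressions $\funalpha_m(c_0, \ldots, c_{n-1}) \in C^+$, and whose condition at such a node requires the partial derivative indexed by $m$ to take the value $t(\funalpha_m(c_0, \ldots, c_{n-1}))$. By Severi's theorem this problem admits a polynomial solution $h_\funalpha$, and I would take $\varphi(\funalpha) = h_\funalpha$, extending $\varphi$ to all remaining function variables by, say, the zero polynomial.

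The delicate, and only non-routine, step is checking that each Hermite problem is \emph{consistent}: if two atomic expressions $\funalpha_m(c_0, \ldots, c_{n-1})$ and $\funalpha_{m'}(c_0', \ldots, c_{n-1}')$ in $C^+$ share a node and satisfy $m \sim m'$, their prescribed derivative values must agree. Saturation places all the $c_i, c_i'$ in $C^+$, and the separation property of $t$ upgrades the equalities $t(c_i) = t(c_i')$ to $c_i \approx c_i'$, whence $\vdash c_i = c_i'$ by Lemma~\ref{ae}(2). Together with $m \sim m'$ this makes the two atomic expressions $\approx$-equivalent, so they share a separation variable and hence a common target value, exactly as required. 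A final structural induction on $c \in C^+$, using saturation to handle the atomic case via the semantic identity $\den{\funalpha_m(c_0, \ldots, c_{n-1})}\varphi\rho = D_m(\varphi(\funalpha))(\den{c_0}\varphi\rho, \ldots, \den{c_{n-1}}\varphi\rho)$ together with the Hermite property of $h_\funalpha$, then confirms $\den{c}\varphi\rho = t(c)$ for all $c \in C^+$; inequivalent members of $C$ are therefore distinguished by $\varphi$ and $\rho$.
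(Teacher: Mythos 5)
Your proposal is correct and follows essentially the same route as the paper's own proof: saturate $C$, use Lemma~\ref{poly-dif} to assign values to the separation variables so that the node polynomials separate inequivalent canonical forms, pose one Hermite problem per function variable with consistency secured via Lemma~\ref{ae}(2), solve it by Severi's theorem, and confirm $\den{c}\varphi\rho = t(c)$ by structural induction. The only cosmetic difference is that you evaluate the node polynomials directly through the homomorphic map $t$, where the paper routes the non-atomic case of the induction through Lemma~\ref{ae}(1); both amount to the same computation.
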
 
\begin{proof}
We can assume w.l.o.g.\ that $C$ is saturated. 
By Lemma~\ref{poly-dif} we can find an assignment $\ov{\rho}$ of reals to the separation variables of the node  polynomials 
$P_c$,  ($c \in C$) 
which separates them, i.e., makes inequivalent ones take on different values, $r_c$.
As 
$P_c \sim P_{c'}$ iff $c \approx c'$, we have $c \approx c'$ iff $r_c = r_{c'}$. 
In the case that $c$ is an atomic expression $a$ we have 
$P_a = v_a$ and so $r_a = \ov{\rho}(v_a)$.
%
We  define an environment $\rho$ by:
\[\rho(x) \;\;  = \;\; \left \{ \begin{array}{ll}
                                  r_x & (x \in C)\\
                                  0 & (\mbox{otherwise})
                           \end{array}
                  \right .\]



We next set up an $n$-dimensional Hermite interpolation problem for each function variable $\funalpha\type n$ appearing
 in  the atomic expressions in $ C$.
For each atomic expression $\funalpha_m(c_0,\ldots,c_{n-1}) \in C$ we add a  node
 \[(r_{c_0},\ldots,r_{c_{n-1}})\] 
 and a condition:
  \[D_m(h)(r_{c_0},\ldots,r_{c_{n-1}}) = r_{\funalpha_m({c_0},\ldots, c_{n-1})}\] 
Note that the nodes are determined by the values of the node polynomials.

To show these conditions consistent, let $a = \funalpha_m(c_0,\ldots,c_{n-1})$ and $a' = \funalpha_{m'}(c'_0,\ldots,c'_{n-1})$ be two atomic expressions in $C$ such that $m \sim m'$ and $r_{c_i} = r_{c'_i}$ for $i = 0,n-1$.  Then, by the above, we have 
$c_i \approx c'_i$, and so,  by  part (2) of Lemma~\ref{ae}, $\vdash c_i = c'_i$, for $i = 0,n-1$. So $a \approx a'$ as atomic expressions and so too, therefore, as canonical forms. It follows that $r_a = r_{a'}$, as required for consistency.

We can therefore obtain a polynomial function environment by taking $\varphi(\funalpha)$ to be an $m$-ary interpolating polynomial for the corresponding Hermite interpolation problem, if $\funalpha$ appears in some $c \in C$, and constantly $0$, otherwise.



We claim that:
 \[\den{c}\varphi\rho \;\;  = \;\; r_c\]
  for every $c \in C$.
We establish the claim by structural induction on the expressions in $C$.

First, consider a variable $x \in C$. 
We have $\den{x}\varphi\rho   = \rho(x) = r_x$, as required.

Next, consider an atomic expression  $\funalpha_m(c_0,\ldots,c_{n-1}) \in C$. By saturation we have $c_0,\ldots,c_{n-1} \in C$.  So, by the induction hypothesis, we have  $\den{c_i}\varphi\rho = r_{c_i}$ for $i = 0, n-1$. We then have:
\[\begin{array}{lcl}\den{\funalpha_m(c_0,\ldots,c_{n-1})}\varphi\rho & = & D_m(\varphi(\funalpha))(\den{c_0}\varphi\rho, \ldots, \den{c_{n-1}}\varphi\rho)\\
 & \;\;=\;\; & D_m(\varphi(\funalpha))(r_{c_0}, \ldots, r_{c_{n-1}})\\
 & \;\;=\;\; &  r_{\funalpha_m({c_0},\ldots, c_{n-1})}
\end{array}\]
with the last line holding as $\varphi(\funalpha)$ solves the Hermite interpolation problem  for $\funalpha$ set up above.

Lastly, consider a  canonical form $c \in C$ which is not an atomic expression. Let $v_{a'_0}, \ldots, v_{a'_{n-1}} $ be the variables of $P_c$. For every $a'_i$ there is an immediate atomic subexpression $a_i$ of $c$ such that $a_i \approx a'_i$ (and so $v_{a'_i} = v_{a_i}$).
By  part (1) of Lemma~\ref{ae} we then have
\[\vdash c = P_c[a_{0}/v_{a_0}, \ldots, a_{n-1}/v_{a_{n-1}}]\]
%
 %
 As $C$ is saturated, it contains the $a_i$. Since $c$ is not itself an atomic expression, we can apply the induction hypothesis to the $a_i$ and so we have $\den{a_i}\varphi\rho = r_{a_i}$.
 
 We may then calculate that:
\[\begin{array}{lcl}
\den{c}\varphi\rho & \;\;=\;\;  &\den{P_c[a_0/v_{a_0}, \ldots, a_{n-1}/v_{a_{n-1}}]}\varphi\rho\\
                             & \;\;=\;\;  &\den{P_c(v_{a_0}, \ldots, v_{a_{n-1}})}\varphi
                                     \rho[\den{a_0}\varphi\rho/v_{a_0}, \ldots, 
                                            \den{a_{n-1}}\varphi\rho/v_{a_{n-1}}] \\
                             & \;\;=\;\;  &\den{P_c(v_{a_0}, \ldots, v_{a_{n-1}})}\varphi
                                          \rho[r_{a_0}/v_{a_0}, \ldots, r_{a_{n-1}}/v_{a_{n-1}}] \\
                              & \;\;=\;\;  &\den{P_c(v_{a_0}, \ldots, v_{a_{n-1}})}\varphi
                                          \rho[\ov{\rho}(v_{a_0})/v_{a_0}, \ldots, \ov{\rho}(v_{a_{n-1}})/v_{a_{n-1}}] \\
                              & \;\;=\;\; & r_c
\end{array}\]
with the last line holding by the definition of $r_c$. This concludes the inductive  proof.


Finally, taking  $c \not\approx c' \in C$, we have $r_c \neq r_{c'}$, and so, by the above, $\den{c}\varphi\rho \neq \den{c'}\varphi\rho$, and so, as required, $\varphi$ and $\rho$ distinguish any two inequivalent elements of  $C$, concluding the proof.
\end{proof}

%
%

We next strengthen the polynomial separation theorem to natural number separation:


\begin{theorem}[Natural Number Separation] \label{natsep} 
Let $C$ be a finite set of canonical forms. Then there is a natural number polynomial function environment  $\varphiN$ and a natural number environment $\rhoN$ that distinguish any two inequivalent elements of  $C$.
\end{theorem}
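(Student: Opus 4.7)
My plan is to lift Theorem~\ref{sep} to natural numbers by treating the coefficients of the separating polynomials in $\varphi$ and the values of $\rho$ as formal indeterminates, then applying the polynomial-has-natural-number-non-vanishing-point technique already used in Lemma~\ref{poly-dif}. The intuition is that Theorem~\ref{sep} exhibits a single real point at which certain polynomials in these parameters fail to vanish, hence those polynomials are not identically zero, and so they are non-zero at some natural number point.

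Concretely, first apply Theorem~\ref{sep} to obtain a polynomial function environment $\varphi$ and an environment $\rho$ distinguishing the inequivalent elements of $C$. Let $D$ bound the degrees of $\varphi(\funalpha)$ for the finitely many function variables $\funalpha$ appearing in $C$. Parametrise by introducing, for each such $\funalpha\type n$, real indeterminates $t_{\funalpha,\vec{k}}$ indexed by multi-indices $\vec{k}$ with $|\vec{k}| \leq D$, and defining $\varphi_t(\funalpha)(\vec{x}) = \sum_{\vec{k}} t_{\funalpha,\vec{k}} \vec{x}^{\vec{k}}$; for each variable $y$ occurring free in some $c \in C$, introduce an indeterminate $u_y$ and set $\rho_u(y) = u_y$. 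A routine structural induction on canonical forms shows that for each $c \in C$ the denotation $\den{c}\varphi_t\rho_u$ is a polynomial $Q_c$ in the indeterminates $t$ and $u$ (the base cases for variables and constants are immediate; sums and products compose polynomials; an application $\funalpha_m(c_0,\ldots,c_{n-1})$ becomes a partial derivative of $\varphi_t(\funalpha)$ evaluated at the $Q_{c_i}$'s, still polynomial in the parameters).

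For every pair $c \not\approx c'$ in $C$, Theorem~\ref{sep} exhibits a real assignment of the parameters at which $Q_c - Q_{c'}$ takes a non-zero value; so $Q_c - Q_{c'}$ is not identically zero. The finite product $R = \prod_{c \not\approx c' \in C}(Q_c - Q_{c'})$ is then a non-zero polynomial, and the same one-variable-at-a-time argument used in the proof of Lemma~\ref{poly-dif} produces natural number values of all the $t_{\funalpha,\vec{k}}$ and $u_y$ at which $R$, and hence every factor $Q_c - Q_{c'}$, is non-zero. Fixing such an assignment, define $\varphiN$ by $\varphiN(\funalpha) = \varphi_t(\funalpha)$ for function variables $\funalpha$ appearing in $C$ and by the constantly zero function otherwise, and define $\rhoN$ by $\rhoN(y) = u_y$ on the free variables of $C$ and $0$ elsewhere. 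Then $\varphiN$ is a natural number polynomial function environment, $\rhoN$ is a natural number environment, and by construction $\den{c}\varphiN\rhoN \neq \den{c'}\varphiN\rhoN$ whenever $c, c' \in C$ and $c \not\approx c'$. The only slightly delicate step is verifying the polynomiality of $Q_c$ in the parameters, which is otherwise a bookkeeping induction.
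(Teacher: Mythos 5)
Your proof is correct and follows essentially the same strategy as the paper: treat the real data of the separating environment from Theorem~\ref{sep} (polynomial coefficients and variable values) as indeterminates, observe that the separation conditions become non-vanishing conditions on real polynomials in those indeterminates, and then choose natural number values by the argument of Lemma~\ref{poly-dif}. The only difference is bookkeeping: the paper abstracts just the constants actually occurring in the interpolating polynomials and obtains the parameter polynomials syntactically, via substitution of abstracts followed by canonicalisation (reusing Lemmas~\ref{semsub} and~\ref{canonicalisation}), whereas you introduce a full set of generic coefficients up to a degree bound and establish polynomiality of the denotation in the parameters by a direct semantic induction.
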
 
\begin{proof}
 By  polynomial separation (Theorem~\ref{sep}), there is a polynomial function environment $\varphi$ and an environment $\rho$ such that distinguish any two inequivalent elements of $C$.

Let $\funalpha_i \type m_i$ (for $i = 0,  n-1$) be the function variables occurring in the $c \in C$, and, for $i = 0,  n-1$,  let 
$P_i(x_0, \ldots, x_{m_i -1})$ be polynomials defining $\varphi(\funalpha_i)$.
Let $r_0, \ldots , r_{q-1}$ be  the constants occurring in these $P_i$ and, for $i = 0,  n-1$, let $Q_i(x_0, \ldots, x_{m_i -1},  y_0,  \ldots, y_{q-1})$ be natural number polynomials such that
\[Q_i[r_0/y_0,  \ldots, r_{q-1}/y_{q-1}] = P_i\]
and where no $y_k$ occurs freely in any $c \in C$.  
(The $Q_i$ can be obtained   from the $P_i$ by replacing $r_k$ by $y_k$, for $k = 0,q-1$.)

For $c \in C$, set $d_c = \CF(c[\ldots, (x_0,  \ldots, x_{m_i}).\, Q_i/\funalpha_i ,\ldots])$.  As $\funalpha_i$ is a list of function variables including all those of $c$, and as $\FnV(c) \subseteq \FnV(e)$ by Lemma~\ref{canonicalisation}, we see that $c[\ldots, (x_0,  \ldots, x_{m_i}).\, Q_i/\funalpha_i ,\ldots]$ has no function variables and so, again by Lemma~\ref{canonicalisation},  neither does its canonical form $d_c$. As any canonical form with no function variables is a polynomial, it follows that $d_c$ is a polynomial. 
 
 Setting $\ov{\rho} = \rho[\ldots,r_k/y_k,\ldots]$, we calculate:
\[\hspace{-8pt}\begin{array}{lcl}
\den{c}{\varphi}{\rho} & \;\;=\;\; &   
  \den{c}{\varphi[\ldots, (u_0,\ldots,u_{m_i}\! \in\! \mathbb{R}\mapsto \den{P_i}\rho[\ldots, u_j/x_j,\ldots])/\funalpha_i,\ldots]}{\rho}\\[0.3em]

& \;\;=\;\; & \den{c}
          {\varphi[\ldots, (u_0,\ldots,u_{m_i}\! \in\! \mathbb{R}\mapsto \den{Q_i[ \ldots, r_k/y_k, \ldots]}\rho[\ldots, u_j/x_j,\ldots])/\funalpha_i,\ldots]}{\rho}\\[0.3em]

& \;\;=\;\; & \den{c}
          {\varphi[\ldots, (u_0,\ldots,u_{m_i}\! \in\! \mathbb{R}\mapsto \den{Q_i}\rho[\ldots, u_j/x_j,\ldots,r_k/y_k,\ldots])/\funalpha_i,\ldots]}{\rho}\\[0.3em]

& \;\;=\;\; & \den{c}
          {\varphi[\ldots, (u_0,\ldots,u_{m_i}\! \in\! \mathbb{R}\mapsto \den{Q_i}\ov{\rho}[\ldots, u_j/x_j,\ldots])/\funalpha_i,\ldots]}{\ov{\rho}}\\[0.3em]

& \;\;=\;\; & \den{c[\ldots, (x_0,  \ldots, x_{m_i}).\, Q_i/\funalpha_i ,\ldots]}
          {\varphi}{\ov{\rho}}\\[0.3em]

& \;\;=\;\; & \den{d_c}{\ov{\rho}}\\[0.3em]
 \end{array}\]
 where we have used Lemma~\ref{semsub}, canonicalisation, and consistency (Theorem~\ref{consistency}). 
As $\varphi$, $\rho$ separate inequivalent $c \in C$ we therefore see that the corresponding $d_c$ are inequivalent as functions and so as polynomials. By Lemma~\ref{poly-dif} there is  a natural number environment $\rhoN$ separating any two inequivalent $d_c$, and so any $d_c$ and $d_c'$ with $c$ and $c'$ inequivalent.

 
 Next, define a natural number polynomial function environment $\varphiN$ by setting:
 \[\varphiN(\funalpha) \;\; = \;\; 
     \left \{ \begin{array}{cl} u_0, \ldots, u_{n_i} \in \R \mapsto  \den{Q_i}\rhoN[u_0/x_0,  \ldots, u_{n_i}/x_{n_i}] & (\funalpha = \funalpha_i)\\
                                              0 & (\mbox{otherwise})
                                                                                                                                                                \end{array}\right .\]
For any $c \in C$ we have:
 \[\begin{array}{lcl}
 \den{c}\varphiN\rhoN 
                                   & \;\;=\;\; &  \den{c}0[\ldots, (u_0, \ldots, u_{n_i} \in \R \mapsto  \den{Q_i}\rhoN[\ldots, u_j/x_j,  \ldots])/\funalpha_i, \ldots]\rhoN  \\[0.3em]
                                   & \;\;=\;\; & \den{c[\ldots, (x_0,  \ldots, x_{m_i}).\, Q_i/\funalpha_i ,\ldots]}0\rho_N     \\[0.3em]
                                   & \;\;=\;\; & \den{d_c}\rho_N
 \end{array}\]
again making use of  Lemma~\ref{semsub}, canonicalisation, and consistency. 

So as $\rho_N$ separates $d_c$ and $d_{c'}$  whenever $c$ and $c'$ are inequivalent, we see that the natural number polynomial function environment $\varphiN$ and the natural number environment $\rhoN$ separate any two inequivalent elements of $C$, concluding the proof.
\end{proof}

We can now prove completeness relative to natural number polynomial function environments and natural number environments:
\begin{theorem}[Natural number completeness] \label{complete} For any expressions $e$ and $e'$ we have:
\[\models_{\mathbb{N}} e = e' \implies \vdash  e = e'\]
\end{theorem}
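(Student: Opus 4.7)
The plan is to argue by contrapositive: assuming $\not\vdash e = e'$, I will exhibit a natural number polynomial function environment $\varphiN$ and a natural number environment $\rhoN$ on which $e$ and $e'$ take different values, which contradicts $\models_{\mathbb{N}} e = e'$.

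First, by the Canonicalisation Lemma (Lemma~\ref{canonicalisation}) we have canonical forms $\CF(e)$ and $\CF(e')$ with $\vdash e = \CF(e)$ and $\vdash e' = \CF(e')$. Since $\not\vdash e = e'$, the congruence and transitivity rules give $\not\vdash \CF(e) = \CF(e')$. Then by the contrapositive of Lemma~\ref{ae}(2), $\CF(e) \not\approx \CF(e')$.

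Next, form a finite saturated set $C$ of canonical forms containing both $\CF(e)$ and $\CF(e')$ (obtained by closing $\{\CF(e), \CF(e')\}$ under immediate atomic subexpressions and under the argument-extraction condition, which terminates by structural induction on size). Applying the Natural Number Separation Theorem (Theorem~\ref{natsep}) to $C$ yields a natural number polynomial function environment $\varphiN$ and a natural number environment $\rhoN$ which distinguish any two inequivalent members of $C$; in particular, since $\CF(e) \not\approx \CF(e')$, we have $\den{\CF(e)}\varphiN\rhoN \neq \den{\CF(e')}\varphiN\rhoN$.

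Finally, by Consistency (Theorem~\ref{consistency}), $\vdash e = \CF(e)$ and $\vdash e' = \CF(e')$ imply $\den{e}\varphiN\rhoN = \den{\CF(e)}\varphiN\rhoN$ and $\den{e'}\varphiN\rhoN = \den{\CF(e')}\varphiN\rhoN$, so $\den{e}\varphiN\rhoN \neq \den{e'}\varphiN\rhoN$, contradicting $\models_{\mathbb{N}} e = e'$. The proof is mostly an orchestration of previously established results; the only real work is assembling the contrapositive correctly, and all the hard content (canonicalisation, the equivalence-to-provability bridge of Lemma~\ref{ae}, and the Hermite-interpolation-based separation) has already been done in the earlier lemmas and theorems.
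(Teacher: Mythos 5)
Your proof is correct and follows essentially the same route as the paper: canonicalise, use Lemma~\ref{ae}(2) to pass from $\not\vdash \CF(e)=\CF(e')$ to $\CF(e)\not\approx\CF(e')$, and then invoke Natural Number Separation (Theorem~\ref{natsep}) together with Consistency to contradict $\models_{\mathbb{N}} e = e'$. The only differences are cosmetic: you needn't saturate $C$ yourself (Theorem~\ref{natsep} accepts any finite set of canonical forms, saturation being handled w.l.o.g.\ inside its proof), and you helpfully make explicit the use of Consistency to transfer the denotational inequality from the canonical forms back to $e$ and $e'$, a step the paper leaves implicit.
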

\begin{proof} Suppose that $\models_{\mathbb{N}} e = e'$ but, for the sake of contradiction, that $\not\vdash e = e'$. 
By canonicalisation (Lemma~\ref{canonicalisation}) we then have 
$\not\vdash c = c'$, where $c = \CF(e)$ and $c' = \CF(e')$; so, by part (2) of Lemma~\ref{ae}, we have $c \not\approx c'$.
Therefore, by  
Theorem~\ref{natsep}, there is a natural number polynomial function environment $\varphiN$ and a natural number environment $\rhoN$ such that
$\den{c}\varphiN\rhoN \neq \den{c'}\varphiN\rhoN$, obtaining the required contradiction.
\end{proof}

We next upgrade part (2) of Lemma~\ref{ae} to an equivalence and use that to analyse the  theorems of our theory  in terms of canonical forms and their $\approx$ relation; we also obtain  a useful characterisation of the equivalence of atomic expressions in terms of the equivalence of their  canonical subexpressions:
\begin{theorem} \label{analysis}
\lskip
\begin{enumerate}
\item For any expressions $e$ and $e'$ we have:
\[\vdash e = e' \iff \CF(e) \approx \CF(e')\]
\item For any canonical forms  $c$ and $c'$ we have:
\[\vdash c = c' \iff c \approx c'\]
\item For any atomic  forms  $a$ and $a'$, $a \approx a'$ holds iff:
\begin{enumerate}
\item $a$ and $a'$ are identical variables, or else 
\item
for some function variable $\funalpha\type n$, and for some  $m \sim m'$,  they have the forms $\funalpha_m(c_0,\ldots,c_{n-1})$ and $\funalpha_{m'}(c'_0,\ldots,c'_{n-1})$ with $ c_i \approx c'_i$ for $i = 0,n-1$.
\end{enumerate}
\end{enumerate}
\end{theorem}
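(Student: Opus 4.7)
The plan is to prove part (2) first, since parts (1) and (3) then follow essentially formally from it.

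For part (2), the direction $c \approx c' \implies \vdash c = c'$ is precisely part (2) of Lemma~\ref{ae}. For the converse, I would reason by contraposition. Suppose $c \not\approx c'$. Apply Natural Number Separation (Theorem~\ref{natsep}) to the finite set $\{c, c'\}$ to obtain a natural number polynomial function environment $\varphiN$ and a natural number environment $\rhoN$ such that $\den{c}\varphiN\rhoN \neq \den{c'}\varphiN\rhoN$. Consistency (Theorem~\ref{consistency}) tells us that every provable equation has equal denotations under every function environment and environment, so we conclude $\not\vdash c = c'$.

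Part (1) then follows from part (2) via Canonicalisation (Lemma~\ref{canonicalisation}), which provides $\vdash e = \CF(e)$ and $\vdash e' = \CF(e')$. The forward direction combines these with $\vdash e = e'$ to obtain $\vdash \CF(e) = \CF(e')$, whence $\CF(e) \approx \CF(e')$ by part (2); the backward direction uses $\CF(e) \approx \CF(e')$, applies Lemma~\ref{ae}(2) to get $\vdash \CF(e) = \CF(e')$, and composes with the same two canonicalisation identities. Part (3) comes directly from the original definition of $\approx$ on atomic expressions, which is stated verbatim but with the hypothesis $\vdash c_i = c'_i$ in place of $c_i \approx c'_i$; since each $c_i$ is itself a canonical form, part (2) shows these two hypotheses are equivalent, and the claimed characterisation follows.

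The only point that genuinely needs checking, and which is the closest thing to an obstacle here, is the absence of circularity: part (2) invokes only Theorem~\ref{natsep} and Lemma~\ref{ae}(2), neither of which depends on the present theorem, while parts (1) and (3) rely solely on part (2) and on previously established facts. Given that, the deduction goes through cleanly, and I expect no further difficulty beyond careful bookkeeping.
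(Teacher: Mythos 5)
Your proof is correct and follows essentially the same route as the paper: part (2) right-to-left is Lemma~\ref{ae}(2), the converse comes from separation plus consistency, and parts (1) and (3) reduce to part (2) via canonicalisation and the definition of $\approx$ on atomic expressions respectively. (The paper invokes "separation" without specifying which version; your use of Theorem~\ref{natsep} rather than Theorem~\ref{sep} is an inessential difference, and your circularity check is sound.)
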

\begin{proof} For the second part, we already have the implication from right to left, by   part (2) of Lemma~\ref{ae}. The other direction follows from consistency and separation. For the first part, by canonicalisation we have that $\vdash e = e'$ holds iff $\vdash \CF(e) = \CF(e')$ does, and
%
%
 conclusion follows from the second part. 
The third part follows immediately from the second part and the definition of the equivalence relation between atomic expressions.
\end{proof}

We next seek a local criterion for canonical expression equivalence. Let $A$ be a set of atomic expressions, and let $w_a \;\; (a\in A)$ be an $A$-indexed set of variables. We define polynomials $P_{w,c}$ for  canonical forms whose immediate atomic subexpressions are included in $A$ by structural induction on $c$\hspace{0.2pt}:
\[P_{w,r} = r\; (r \in \R) \qquad  P_{w,a} = w_a \; 
\qquad P_{w,c + c'} = P_{w,c} + P_{w,c'} \qquad P_{w,c c'} = P_{w,c} P_{w,c'}\]
and we say that $w$ is \emph{equivalence-characterising} if, for all $a,a' \in A$, we have:    
\[w_a = w_{a'} \iff a \approx a'\]
%
\begin{lemma} \label{can-anal} Let $c,c'$ be canonical forms, and let $A$ be a set of atomic expressions including the immediate atomic subexpressions of $c$ and $c'$ and let $w_a \;\; (a\in A)$ be an equivalence-characterising $A$-indexed set of variables. Then:
\[c \approx c' \iff  P_{w,c} \sim P_{w,c'}\]
\end{lemma}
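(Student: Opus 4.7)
The plan is to observe that $P_{w,c}$ is nothing but $P_c$ with each separation variable $v_a$ renamed to $w_a$; once this renaming is seen to be a bijection on the variables actually occurring in $P_c$ and $P_{c'}$, the equivalence will follow because $\sim$ is preserved by bijective variable renaming in both directions, and $c \approx c'$ is defined to be $P_c \sim P_{c'}$.

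First, I would verify the renaming claim. Let $A_0 \subseteq A$ be the (finite) set of immediate atomic subexpressions of $c$ and $c'$, and let $\sigma$ be the substitution sending $v_a \mapsto w_a$ for every $a \in A_0$ (and the identity elsewhere). A straightforward structural induction on $c$ shows $P_c[\sigma] = P_{w,c}$: the constant, sum, and product cases match clause-by-clause in the two parallel definitions of $P_{(-)}$ and $P_{w,(-)}$, and the atomic case reduces to $\sigma(v_a) = w_a$. The analogous identity holds for $c'$.

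Next I would check that $\sigma$, restricted to the variables on which it acts nontrivially, is a bijection. If $v_a = v_{a'}$ for $a,a' \in A_0$, then $a \approx a'$ since $v$ is equivalence-characterising by its choice, and hence $w_a = w_{a'}$ since $w$ is equivalence-characterising on $A \supseteq A_0$; injectivity is symmetric, using the ``iff'' in the hypothesis on $w$. Thus $\sigma$ restricts to a bijection between $\{v_a : a \in A_0\}$ and $\{w_a : a \in A_0\}$, with a two-sided inverse $\tau$ (which, after a harmless preliminary renaming to avoid accidental name clashes with variables already present, acts as a substitution on polynomials). Since $\sim$ is preserved by substitution, $P_c \sim P_{c'}$ yields $P_{w,c} = P_c[\sigma] \sim P_{c'}[\sigma] = P_{w,c'}$, and applying $\tau$ to the latter recovers the former. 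Combined with the defining equivalence $c \approx c' \iff P_c \sim P_{c'}$, this gives the lemma. The only work is the bookkeeping for the renaming; I anticipate no substantive obstacle.
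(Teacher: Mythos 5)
Your proposal is correct and follows essentially the same route as the paper: both establish $P_{w,c}=P_c[\ldots,w_a/v_a,\ldots]$ by structural induction and then use that the equivalence-characterising property of $v$ and $w$ makes this renaming a well-defined bijection on the relevant variables, so that $\sim$ transfers in both directions under substitution. The bookkeeping points you flag (well-definedness, injectivity, the inverse substitution) are exactly the ones the paper handles by enumerating representatives $a_1,\ldots,a_k$.
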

\begin{proof} It suffices to show that $P_c \sim P_{c'}$ holds iff $P_{w,c} \sim P_{w,c'}$ does. Choose $a_1,\ldots,a_k \in A$ such that $w_{a_1},\ldots,w_{a_k}$ enumerates $\{w_a\, |\,  a \in A\}$. Then, as both $w$ and $v$ are equivalence-characterising, $v_{a_1},\ldots,v_{a_k}$ enumerates $\{v_a | a \in A\}$. A straightforward induction shows that $P_c[w_{a_1}/v_{a_1},\ldots,w_{a_k}/v_{a_k}] = P_{w,c}$, and similarly for  $c'$. Assuming 
$ P_{c} \sim P_{c'}$, we then find:
\[P_{w,c} = P_c[w_{a_1}/v_{a_1},\ldots,w_{a_k}/v_{a_k}] \sim P_{c'}[w_{a_1}/v_{a_1},\ldots,w_{a_k}/v_{a_k}] = P_{w,c'}\]
that is, $ P_{c} \sim P_{c'}$ implies $ P_{w,c} \sim P_{w,c'}$. The converse is proved similarly.
\end{proof}

As we now show, the equivalence relation on canonical forms can be viewed as a combination of polynomial equivalence and the commutativity of partial differentiation with respect to different variables. We write
\[ \rdash e = e'\]
to mean that $e $ and $e'$ can be proved equal using only the ring axioms, the addition and multiplication tables, and the commutativity axiom for partial differentiation (i.e., without using any partial differentiation axioms other than commutativity).
\begin{lemma} \label{technical}
For any canonical form $c$, and set of atomic expressions $A$ with $\Imm(c) \subseteq A$ 
and such that for all $a, a' \in A$ if $a \approx a'$ then $\rdash a = a'$, and for 
any $\approx$-characterising $A$-indexed variable assignment $w$ we have 
\[\rdash c =  P_{w,c}[a_1/v_{a_1}, \ldots, a_n/v_{a_n}]\]
where $v_{a_1}, \ldots, v_{a_n}$ are the variables of $P_{w,c}$ for  $a_1, \ldots, a_n \in A$.
\end{lemma}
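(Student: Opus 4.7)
The plan is to proceed by structural induction on $c$, paralleling the proof of part~(1) of Lemma~\ref{ae} but working in the weaker provability $\rdash$. The overall shape of the argument is the same: constants and atomic expressions yield the claim directly, while the sum and product cases follow by congruence. The new ingredient is the hypothesis that $\approx$-equivalent atomic expressions in $A$ are already $\rdash$-equal; this is what allows the atomic base case to go through, since we can no longer invoke the full theory as in Lemma~\ref{ae}.

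For the base cases: if $c = r \in \R$, then $P_{w,r} = r$ has no variables and the substitution is trivially $r$, so $\rdash r = r$ immediately. If $c = a$ is atomic, then $\Imm(a) = \{a\} \subseteq A$ and $P_{w,a} = w_a$ has the single variable $w_a$, with corresponding representative some $a_1 \in A$ satisfying $w_{a_1} = w_a$. Since $w$ is equivalence-characterising, this forces $a \approx a_1$, and the hypothesis on $A$ then yields $\rdash a = a_1$, i.e., $\rdash c = P_{w,c}[a_1/w_a]$.

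For the inductive step, suppose $c = c_1 + c_2$ (the product case is entirely analogous). Then $\Imm(c) = \Imm(c_1) \cup \Imm(c_2)$, still included in $A$, and $P_{w,c} = P_{w,c_1} + P_{w,c_2}$, so the variables of $P_{w,c}$ are the union of those of its two summands. The hypotheses on $A$ and $w$ transfer unchanged to each $c_i$, so the induction hypothesis applies. Fix a choice of representatives $a_1, \ldots, a_n \in A$ for the variables of $P_{w,c}$ and restrict it to get representatives for the variables of each $P_{w,c_i}$; applying the induction hypothesis to each $c_i$ and combining via the congruence rule for $+$ (available in $\rdash$), one obtains $\rdash c_1 + c_2 = P_{w,c_1}[a_1/w_{a_1},\ldots,a_n/w_{a_n}] + P_{w,c_2}[a_1/w_{a_1},\ldots,a_n/w_{a_n}]$, and this right-hand side equals $P_{w,c}[a_1/w_{a_1},\ldots,a_n/w_{a_n}]$ by the syntactic definition of substitution on polynomials.

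The main (and essentially only) subtlety is the bookkeeping: the same representative must be used consistently across variables shared between $P_{w,c_1}$ and $P_{w,c_2}$, so that the two substituted summands recombine into $P_{w,c}[a_1/w_{a_1},\ldots,a_n/w_{a_n}]$. This is automatic once the global choice is fixed and then restricted to each subexpression, so no essentially new argument is required beyond the induction.
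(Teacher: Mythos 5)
Your proof is correct and takes essentially the same approach as the paper: a structural induction on $c$ whose only non-trivial case is the atomic one, resolved exactly as you do by combining the equivalence-characterising property of $w$ with the hypothesis that $\approx$-equivalent atomic expressions in $A$ are $\rdash$-equal. The paper treats the sum and product cases as immediate; your extra bookkeeping about sharing representatives across subexpressions is harmless elaboration of the same argument.
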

\begin{proof}
The proof is a simple structural induction on $c$. For
an atomic expression $a$, we have $P_{w,c} = v_a$ and $v_a = v_{a_1}$,  by assumption. So as $w$ is $\approx$-characterising, we have $a \approx a_1$ and so $\rdash a = a_1$, by assumption.
\end{proof}

\begin{theorem} \label{RTC} For any canonical forms $c$ and $c'$ we have:
\[\vdash c = c' \iff c \approx c' \iff \rdash c = c'\]
\end{theorem}
\begin{proof} We prove that
for all $c,c'$
\[c \approx c' \implies \rdash c = c'\]
by induction on $\max(|c|,|c'|)$, when the conclusion follows immediately from Theorem~\ref{analysis}.

In the case where $c$ (say)  is an atomic expression $a$, then $P_{c'} \sim P_c = v_a$, and so $\rdash c' = a'$ for some $a' \in \Imm(c')$ with $v_{a'} = v_a$.
As $v_{a'} = v_a$, we have $a \approx a'$, and we then use the induction hypothesis to prove $\rdash a  = a'$ using  the characterisation of equivalence of atomic expressions given by part (3) of  Theorem~\ref{analysis}.  

Otherwise neither $c$ nor $c'$ is an atomic expression and so, setting $A = \Imm(c) \cup \Imm(c')$, by induction  
 we have $\rdash a = a'$ whenever $a \approx a'$, for $a,a' \in A$. Let $w$ be any $\approx$-characterising $A$-indexed variable assignment. Applying Lemma~\ref{technical} we find 
$\rdash c =  P_{w,c}[a_1/v_{a_1}, \ldots, a_n/v_{a_n}]$ 
and 
$\rdash c' =  P_{w,c'}[a_1/v_{a_1}, \ldots, a_n/v_{a_n}]$.
Further, by Lemma~\ref{can-anal} , we have $P_{w,c} \sim P_{w,c'}$. It follows that
$\rdash P_{w,c}[a_1/v_{a_1}, \ldots, a_n/v_{a_n}] = P_{w,c'}[a_1/v_{a_1}, \ldots, a_n/v_{a_n}]$. Putting these three things together we obtain $\rdash c = c'$, as desired.
\end{proof}

\myomit{
\begin{theorem} \label{RTC} For any canonical forms $c$ and $c'$ we have:
\[\vdash c = c' \iff c \approx c' \iff \rdash c = c'\]
\end{theorem}
\begin{proof} We prove the following two statements by simultaneous induction:

\begin{enumerate}
\item
For all $c,c'$
\[c \approx c' \implies \rdash c = c'\]
\item
For any canonical form $c$, and set of atomic expressions $A$ with $\Imm(c) \subseteq A$ 
and such that for all $a, a' \in A$ if $a \approx a'$ then $\rdash a = a'$, and for 
any $\approx$-characterising $A$-indexed variable assignment $w$ we have 
\[\rdash c =  P_{w,c}[a_1/v_{a_1}, \ldots, a_n/v_{a_n}]\]
where $v_{a_1}, \ldots, v_{a_n}$ are the variables of $P_{w,c}$ for  $a_1, \ldots, a_n \in A$.
\end{enumerate}
For part (i) we proceed by induction on $\max(|c|,|c'|)$, and for part (ii) we proceed by induction on $|c|$.  
Once we have established part (i), the conclusion follows immediately from Theorem~\ref{analysis}.

For part (i), in the case where $c$ (say)  is an atomic expression $a$, then $P_{c'} \sim P_c = v_a$, and so $\rdash c' = a'$ for some $a' \in \Imm(c')$ with $v_{a'} = v_a$, and so $a \approx a'$. We can then use the induction hypothesis to prove $\rdash a  = a'$ using  the characterisation of equivalence of atomic expressions given by part (3) of  Theorem~\ref{analysis}.  

Otherwise neither $c$ nor $c'$ is not an atomic expression and so, setting $A$ to be the set of immediate atomic expressions of either of them, using the induction hypothesis, 
 we have $\rdash a = a'$ whenever $a \approx a'$, for $a,a' \in A$. Let $w$ be any $\approx$-characterising $A$-indexed variable assignment. Applying the second part we find 
$\rdash c =  P_{w,c}[a_1/v_{a_1}, \ldots, a_n/v_{a_n}]$ 
and 
$\rdash c' =  P_{w,c'}[a_1/v_{a_1}, \ldots, a_n/v_{a_n}]$.
Further, by Lemma~\ref{can-anal} , we have $P_{w,c} \sim P_{w,c'}$. It follows that
$\rdash P_{w,c}[a_1/v_{a_1}, \ldots, a_n/v_{a_n}] = P_{w,c'}[a_1/v_{a_1}, \ldots, a_n/v_{a_n}]$. Putting these three things together we obtain $\rdash c = c'$, as desired.

For part (ii) we proceed by cases according to the definition of $P_{w,c}$. The sum and product cases are immediate.
In the case of an atomic expression $a$, we have $P_{w,c} = v_a = v_{a_1}$ (the second equality by assumption). So as $w$ is $\approx$-characterising, we have $a \approx a_1$ and so $\rdash a = a_1$, by assumption.
\end{proof}
}
As may be evident, one can further strengthen the definition of canonical forms so that fewer axioms are needed to prove equivalence. For example, the partial differentiations in atomic expressions can be put in a standard order using the commutativity axiom. One can  show that two such canonical forms are provably equal if, and only, if they can be proved so without using any axioms for partial differentiation; the proof of this fact parallels that of Theorem~\ref{RTC}. Should one wish, one can go further and rewrite polynomials as sums of distinct multinomials; two such canonical forms are provably equal if, and only if, they can be proved so using only the associativity and commutativity of $+$.

\myomit{
\begin{theorem} \label{approx} The equivalence relation $\approx$ between canonical forms  is the least equivalence relation $\dapprox$ between them such that:
\begin{itemize}
\item $x \dapprox x$, and
\item for $\funalpha\type n$, and $m \sim m'$:
\[a_0 \dapprox a'_0, \ldots, a_{n-1} \dapprox a'_{n-1} \implies \funalpha_m(a_0,\ldots,a_{n-1}) \dapprox  \funalpha_{m'}(a'_0,\ldots,a'_{n-1})\] 
and
\item for polynomials $P(x_0,\ldots,x_{n-1}) \sim P'(x_0,\ldots,x_{n-1})$:
\[c_0 \dapprox c'_0, \ldots, c_{n-1} \dapprox c'_{n-1} \implies P(c_0,\ldots,c_{n-1}) \dapprox  P'(c'_0,\ldots,c'_{n-1})\] 
\end{itemize}
\end{theorem}
\begin{proof} In one direction, assume that $c \dapprox c'$. Then we evidently have $\vdash c = c'$ and so, by Theorem~\ref{analysis}, $c \approx c'$. 

In the other direction we first need a property of canonical expressions. For any canonical expression $\ov{c}$, let $v_{j_0}, \ldots, v_{j_{n-1}}$ be a non-repeating list of interpolating variables including the free variables of $P_{\ov{c}}$, and let  $a_0, \ldots, a_{n-1}$ be a list of atomic expressions such that, for any immediate atomic subexpression $a$ of $\ov{c}$, if $a \approx b_{j_i}$ then $a \dapprox a_i$ ($i = 0,n-1$). Then:
\[\ov{c} \dapprox P_{\ov{c}}[a_0/v_{j_0}, \ldots, a_{n-1}/v_{j_{n-1}}] \]
The proof is a structural induction on 
$\ov{c}$. In the case where $\ov{c}$ is an atomic expression $a$, $P_{\ov{c}}$ will be $v_{j_i}$ for some $i \in [n]$ such that $a \approx b_{j_i}$, and then we have $a \dapprox a_i$, as required. The other cases are straightforward.

Now suppose that we have canonical expressions $c$ and $c'$ such that  $c \approx c'$. We show that $c  \dapprox c'$ by induction on  $\max(|c|,|c'|)$, where, for any canonical expression $\ov{c}$, $|\ov{c}|$ is the size of $\ov{c}$.  

If $c$ and $c'$ are atomic expressions the proof goes through using part (3) of Theorem~\ref{analysis}.  

Otherwise let  let $v_{j_0}, \ldots, v_{j_{n-1}}$ be a non-repeating list of the variables of $P_c$ and $P_{c'}$. For each $i \in [n]$ choose an immediate atomic subexpression $a_i$ of either $c$ or $c'$ such that $a_i \approx b_{j_i}$. Then for any immediate atomic subexpression $a$ of $c$ or $c'$, if $a \approx b_{j_i}$ then, as $a_i \approx b_{j_i}$ we have $a \approx a_i$ when, as $a$ and $a'$ are both immediate atomic subexpression of $c$ or $c'$, we have $\max(|a|,|a_i|) \leq \max(|c|,|c'|)$, and we can pass to the first case and conclude that $a \dapprox a_i$. So, by the above property of canonical expressions, we have:
\[c \dapprox P_c[a_0/v_{j_0}, \ldots, a_{n-1}/v_{j_{n-1}}]
\qquad \mbox{and} \qquad c' \dapprox P_{c'}[a_0/v_{j_0}, \ldots, a_{n-1}/v_{j_{n-1}}]\]

Since $P_c \sim P_{c'}$ we also have:
\[P_c[a_0/v_{j_0}, \ldots, a_{n-1}/v_{j_{n-1}}] \dapprox P_{c'}[a_0/v_{j_0}, \ldots, a_{n-1}/v_{j_{n-1}}] \]
We therefore have $c \dapprox c'$, as required.
\end{proof}
}

We turn next to equational completeness.
Let $\funalpha_i\type m_i \; (i = 1,k)$ and   $x_j \; (j = 1,l)$ be the  function variables and the free variables of two expressions $e$ and $e'$. Then a \emph{natural number counterexample} to the equation $e=e'$  consists of
 natural number polynomials $P_i(y_0,  \ldots, y_{m_i})$ and natural numbers $k_1,\ldots, k_l$ such that,
 the closed terms  $\underline{e} \eqdef e[\ldots, (y_0,  \ldots, y_{m_i}).\, P_i/\funalpha_i ,\ldots][\ldots k_j/x_j, \ldots ]$ and $\underline{e}' \eqdef e'[\ldots, (y_0,  \ldots, y_{m_i}).\, P_i/\funalpha_i ,\ldots][\ldots k_j/x_j, \ldots ]$
 are provably equal to different constants.
 %
 As we now see, the natural number separation theorem yields such counterexamples, and thereby enables us to establish equational completeness:
 %

\begin{theorem} \label{HP}The theory of partial differentiation is equationally complete.
\end{theorem}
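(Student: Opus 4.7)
My plan is to prove the theorem via its contrapositive: given any non-theorem $e = e'$, I will show that adding it as a new axiom allows the derivation of $x = y$ for distinct variables $x, y$, rendering the extended theory equationally inconsistent. The strategy is to first manufacture a natural number counterexample in the sense defined just above the theorem statement, and then exploit the availability of reciprocals among the real constants.

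Starting from a non-theorem $e = e'$, the contrapositive of natural number completeness (Theorem \ref{complete}) supplies a natural number polynomial function environment $\varphiN$ and natural number environment $\rhoN$ with $\den{e}\varphiN\rhoN \neq \den{e'}\varphiN\rhoN$. Reading off natural number polynomials $P_i$ defining $\varphiN(\funalpha_i)$ for the function variables $\funalpha_i$ of $e, e'$, and natural numbers $k_j = \rhoN(x_j)$ for their free variables, I form the closed terms $\underline{e}, \underline{e}'$ by the prescribed substitutions. By Lemma \ref{semsub} their denotations coincide with $\den{e}\varphiN\rhoN$ and $\den{e'}\varphiN\rhoN$, hence are distinct natural numbers $n$ and $m$; by Corollary \ref{closed} we then obtain $\vdash \underline{e} = n$ and $\vdash \underline{e}' = m$. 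This produces the natural number counterexample promised by the preceding discussion.

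In the extension by $e = e'$, the substitution rules for variables and function variables applied to the new axiom immediately yield $\underline{e} = \underline{e}'$, and chaining with the two identities just obtained gives $n = m$ as a theorem of the extended theory. It remains to turn $n = m$ (with $n \neq m$) into $x = y$: the ring axioms and the real addition table rewrite it as $c = 0$ for the nonzero real $c = n - m$; multiplying by the real constant $c^{-1}$ and invoking the multiplication table gives $1 = 0$; and then $x = x \cdot 1 = x \cdot 0 = 0 = y \cdot 0 = y \cdot 1 = y$ follows from the ring axioms. The only delicate point is this last step, which uses the availability of reciprocals among the real constants --- precisely the reason, flagged in the introduction, that equational completeness fails when constants are restricted to the integers. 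Everything conceptually substantial is already packaged into natural number completeness; what remains is bookkeeping together with this one use of reciprocals.
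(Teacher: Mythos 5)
Your proof is correct and follows essentially the same route as the paper: invoke natural number completeness (Theorem~\ref{complete}) to obtain a separating natural number polynomial environment, substitute to get closed terms $\underline{e}, \underline{e}'$ provably equal to distinct constants via Corollary~\ref{closed} and consistency, and then derive $1 = 0$ and hence $x = y$ from the tables and ring axioms --- the paper leaves this last step terse, and your explicit use of the reciprocal $c^{-1}$ is exactly the point that fails over the integers, as the paper notes afterwards. One tiny imprecision: since $e$ and $e'$ may contain arbitrary real constants, the two distinct denotations are in general distinct reals rather than natural numbers, but this does not affect your argument.
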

\begin{proof} Suppose that $\not\vdash e = e'$.  We show that adding the equation $e = e'$ makes the theory equationally inconsistent. As $\not\vdash e_0 = e_1$, by natural number polynomial completeness (Theorem~\ref{complete}), there is a natural number polynomial function environment $\varphi$, and a natural number environment $\rho$ such that $\den{e}\varphi\rho$ and $\den{e'}\varphi\rho$ differ.
We can then define a natural number counterexample to the equation. Let $\funalpha_i\type m_i \; (i = 1,k)$ and   $x_j \; (j = 1,l)$ be the  function variables and the free variables of $e$ and $e'$.
For $i = 1, k$, take 
$P_i(y_0,  \ldots, y_{m_i})$ to be a natural number polynomial defining $\varphi(\funalpha_i)$ and take $k_j$ to be $\rho(x_j)$, and define $\underline{e}$ and $\underline{e'}$ as above. By the substitution rule we have $\vdash \underline{e} = \underline{e'}$. By Lemma~\ref{semsub},  we have $\den{\underline{e}} = \den{e}\varphi \rho$ and $\den{\underline{e}'} = \den{e'}\varphi \rho$. So $\den{\underline{e}} \neq \den{\underline{e}'}$, and therefore by Corollary~\ref{closed} and consistency, $\underline{e}$  and $\underline{e}'$ are provably equal to different constants.


Thus, assuming $\vdash e = e'$, we can prove distinct real constants 
$\den{\underline{e}}$ and $\den{\underline{e}'}$ equal. Using the addition and multiplication tables and the ring axioms, we then find that 
$\vdash 1 = 0$ and so
  $ \vdash  x = 0$ and so 
   $\vdash  x = y$. 
\end{proof}

Instead of allowing all reals as constants, one could restrict them, for example  to $\mathbb{Q}$ or   none, except for the ring constants (equivalently $\mathbb{N}$). 
The above development goes through straightforwardly in either of these  cases, except for equational completeness. The proof of equational completeness for $\mathbb{Q}$ goes through as the required polynomials, being over the natural numbers, and the required reals, being natural numbers, are definable, and the argument from the equality of two distinct rationals to that of two distinct variables also goes through.

However, the last part of that argument does not go through  for $\mathbb{N}$. For example it is then consistent to add the equation $1 + 1 = 0$. As a model, one can employ the \emph{boolean differential calculus}~\cite{PS04}. One works over the boolean ring and defines the partial differentiation of boolean functions $h \type \B^n \rightarrow \B$ by:
\[\pdiff{h(x_0, \ldots, x_{n-1})}{x_i} = h(x_0, \ldots, x_{i-1},0, x_{i+1},\ldots, x_{n-1}) + h(x_0, \ldots, x_{i-1},1, x_{i+1},\ldots, x_{n-1})\]
This can be equivalently written, somewhat more transparently, as:
\[\pdiff{h(x_0, \ldots, x_{n-1})}{x_i} = h(x_0, \ldots, x_{i-1},x_i + \mathrm{d} x_i, x_{i+1},\ldots, x_{n-1}) - h(x_0, \ldots, x_{i-1},x_i, x_{i+1},\ldots, x_{n-1})\]
taking $\mathrm{d} x_i = 1$.
Another model of $1 + 1 = 0$  can be constructed from   the clone of polynomials in several variables over the boolean ring, with partial differentiation defined as usual on polynomials over a ring.


With the constants restricted to the rationals, the question of decidability of the equational theory of partial differentiation makes sense, and we have:
\begin{theorem}\label{decidable} With the constants restricted to the rationals, the equational theory of partial differentiation is decidable, and, further,  natural number counterexamples to unprovable equations can be effectively obtained.
\end{theorem}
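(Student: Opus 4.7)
The plan is to combine effective canonicalisation, the structural characterisation of provable equality from Theorem~\ref{analysis}, and an effectivisation of the separation arguments of Section~5.

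For the decidability part, given expressions $e$ and $e'$ with rational constants, I would first compute the canonical forms $\CF(e)$ and $\CF(e')$; the constructions used in Lemmas~\ref{onediff} and~\ref{canonicalisation} are manifestly algorithmic. By Theorem~\ref{analysis}(1) it then suffices to decide $\CF(e) \approx \CF(e')$. I would do this bottom-up on the finite set of sub-canonical-forms and atomic sub-expressions appearing in $\CF(e)$ and $\CF(e')$: by Theorem~\ref{analysis}(3), equivalence of two atomic sub-expressions reduces to equivalence of strictly smaller canonical-form arguments together with the decidable check that their index sequences are permutations of one another, and the recursion is well-founded. Once $\approx$-representatives have been fixed for every atomic sub-expression encountered, one assigns separation variables uniformly, forms the node polynomials $P_{\CF(e)}$ and $P_{\CF(e')}$ (which have rational constants and separation variables as indeterminates), and tests $P_{\CF(e)} \sim P_{\CF(e')}$. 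This last test is decidable by expanding both polynomials into sums of distinct monomials and comparing their rational coefficients.

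For counterexample extraction, when $\CF(e) \not\approx \CF(e')$, I would follow the proofs of Theorems~\ref{sep} and~\ref{natsep} step by step. The only non-elementary ingredients are Lemma~\ref{poly-dif} and Severi's interpolation theorem, and both can be made effective. For Lemma~\ref{poly-dif} form the non-zero product polynomial $Q$ of pairwise differences and exhaustively search the grid $\{0,1,\ldots,d\}^m$, where $d$ is the total degree of $Q$, for a non-vanishing natural-number point; a Schwartz--Zippel-style bound guarantees one exists. Classical constructive proofs of Severi's theorem (see~\cite{Lor92,Lor00}) yield an explicit computable rational interpolating polynomial of bounded degree once the rational nodes and jet values are supplied. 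With these I would mechanically build a rational polynomial function environment $\varphi$ and rational environment $\rho$ separating the saturated closure of $\{\CF(e), \CF(e')\}$, and then, following the proof of Theorem~\ref{natsep}, rewrite the rational coefficients of each $\varphi(\funalpha_i)$ as fresh variables $y_k$ to obtain natural-number polynomials $Q_i$, and apply Lemma~\ref{poly-dif} once more to extract natural-number values $\rhoN(y_k), \rhoN(x_j)$. Reading off $P_i \eqdef Q_i$ and $k_j \eqdef \rhoN(x_j)$ gives the desired counterexample; by Corollary~\ref{closed} and consistency the two closed substitution instances are provably equal to distinct rational constants.

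The main obstacle is ensuring a truly algorithmic form of Severi's theorem, since the whole procedure rests on it. The constructions in~\cite{Lor92,Lor00} furnish explicit interpolating polynomials of bounded degree and so suffice; a secondary technical point is simply bookkeeping, namely ensuring that the saturation step, the choice of separation variables, and the successive substitutions of canonicalisation and Theorem~\ref{natsep} can all be performed uniformly on the input.
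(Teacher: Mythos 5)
Your proposal is correct, but it is not the route the paper takes in the proof of Theorem~\ref{decidable} itself. The paper's proof is a two-line dovetailing argument: interleave an exhaustive search for a formal proof of $e = e'$ with an exhaustive search through all candidate natural number counterexamples (each candidate being checkable, since substituting closed data and invoking Corollary~\ref{closed} reduces the check to comparing two constants); termination is guaranteed because the proof of equational completeness (Theorem~\ref{HP}) shows a natural number counterexample exists whenever $\not\vdash e = e'$. This needs no effective version of Severi's theorem and no explicit decision procedure for $\approx$, but it yields no complexity information. What you describe --- effective canonicalisation, a well-founded recursion deciding $a \approx a'$ via Theorem~\ref{analysis}(3) together with polynomial identity testing over $\mathbb{Q}$ via Lemma~\ref{can-anal}, and then an effectivisation of Theorems~\ref{sep} and~\ref{natsep} using the effectiveness of the Severi construction in~\cite{Lor92} and a bounded grid search for Lemma~\ref{poly-dif} --- is precisely the ``more direct'' alternative the paper sketches in the remark immediately following the theorem, and it goes through as you say (the only slight imprecision is that the final counterexample polynomials should be the $Q_i$ with the fresh variables $y_k$ already instantiated at their natural number values $\rhoN(y_k)$, not the $Q_i$ themselves). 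Your approach buys an explicit algorithm and a path toward complexity bounds, at the cost of having to certify effectivity of every ingredient; the paper's approach buys brevity by concentrating all the work in Theorem~\ref{HP}.
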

\begin{proof}
We interleave two search procedures. One searches for a proof, the other searches through all possible natural number counterexamples.
Since we know from the proof of equational completeness (Theorem~\ref{HP}) that a natural number counterexample exists if an equation is not provable, this procedure will terminate, yielding either a proof or a counterexample.
\myomit{The proof of canonicalisation (Lemma~\ref{canonicalisation})  is effective, and so $\CF(e)$ can be computed from $e$. 
Therefore, by  part (1) of Theorem~\ref{analysis}, the decision problem can be reduced to deciding the equivalence relation on canonical forms.
That this relation $c \approx c'$ is decidable is shown by the following inductive argument on 
$\max(|c|,|c'|)$:
\begin{enumerate}
\item first, deciding whether a relation $a\approx a'$ between atomic expressions holds can be  reduced, via part (3) of Theorem~\ref{analysis}, to deciding whether $c \approx c'$ holds for effectively obtainable finitely many $c$ and $c'$, respectively  strictly smaller than $a$ and $a'$, and,
\item second, for any canonical form $c$, deciding whether $c \approx c'$ can be effectively reduced to deciding all the equivalence relations  $a_i \approx a_{i'}$ holding between the elements of the effectively obtainable finite set $\{a_0, \ldots, a_{n-1}\}$ of the immediate atomic subexpressions of $c$ and $c'$. (The decidability of these equivalence relations is obtained inductively using part (1) of this argument, as 
$\max(|a_i|,|a_{i'}|) \leq \max(|c|,|c'|)$, for any $i,i' \in [n]$.) 

For one can then list the equivalence classes of $\approx$ on this set (which are the same as the equivalence classes of provable equality), use the listing to compute the corresponding node polynomials $P_c$ and $P_{c'}$, and check whether or not the two polynomials are equivalent. 
\end{enumerate}
The correctness of the second part of this procedure makes use of the remarks that the equivalence relation on canonical forms does not depend on the order of listing of equivalence classes of provable equality, and that the construction of the node polynomials $P_c$ only depends on such a listing up to the point where all the immediate atomic subexpressions of $c$ appear.}
\end{proof}

There is another, more direct,  way to prove decidability and to find counterexamples. For decidability,  one uses the effectiveness of the proof of canonicalisation (Lemma~\ref{canonicalisation}) to find canonical forms, and  the characterisations of the equivalence of atomic expressions and canonical forms given by Theorem~\ref{analysis} and Lemma~\ref{can-anal}, to obtain  a recursive algorithm to decide the equivalence of canonical forms. 

With this decision procedure in hand, and with the observation that the proof in~\cite{Lor92} of the Severi theorem  is effective, one observes that the proof of polynomial separation (Theorem~\ref{sep}) is effective. So too, therefore is the proof of natural number separation (Theorem~\ref{natsep}), and, following the proof of equational completeness (Theorem~\ref{HP}), we finally see that, if they exist, 
counterexamples can be found effectively.

\section*{Acknowledgements} I thank Mart\'{i}n Abadi, Jonathan Gallagher, and Tarmo Uustalu for useful discussions, and Sam Staton for helpful remarks including indicating how the  axioms for partial differentiation could be made finitary.

\end{document}